\newtheorem{definition}{Definition}
\newtheorem{theorem}{Theorem}[section]
\newtheorem{corollary}[theorem]{Corollary}
\newcommand{\HPartition}{H-Partition}
\newcommand{\BPartition}{B-Partition}
\newcommand{\nop}[1]{}
\begin{document}


\pagenumbering{arabic}

\title{Memory Efficient De Bruijn Graph Construction}



%
%
%
%

\author{
{Yang Li, Pegah  Kamousi, Fangqiu Han, Shengqi  Yang, Xifeng Yan, Subhash  Suri}%
\vspace{1.5mm}\\
University of California, Santa Barbara\\[0.5mm]
\{yangli, pegah, fhan, sqyang, xyan, suri\}@cs.ucsb.edu
}

\maketitle

\begin{abstract}
Massively parallel DNA sequencing technologies are revolutionizing genomics research. Billions of short reads generated at low costs can be assembled for reconstructing the whole genomes. Unfortunately, the large memory footprint of the existing de novo assembly algorithms makes it challenging to get the assembly done for higher eukaryotes like mammals.  In this work, we investigate the memory issue of constructing de Bruijn graph, a core task in leading assembly algorithms, which often consumes several hundreds of gigabytes memory for large genomes. We propose a disk-based partition method, called Minimum Substring Partitioning (MSP), to complete the task using less than 10 gigabytes memory, without runtime slowdown.  MSP breaks the short reads into multiple small disjoint partitions so that each partition can be loaded into memory, processed individually and later merged with others to form a de Bruijn graph.  By leveraging the overlaps among the k-mers (substring of length k), MSP achieves astonishing compression ratio: The total size of partitions is reduced from $\Theta(kn)$ to $\Theta(n)$, where $n$ is the size of the short read database, and $k$ is the length of a $k$-mer.  Experimental results show that our method can build de Bruijn graphs using a commodity computer for any large-volume sequence dataset.

\medskip
\noindent \textbf{Source codes and datasets}: \url{grafia.cs.ucsb.edu/msp}

\end{abstract}

\section{Introduction}
High-quality genome sequencing is foundational to many critical biological and medical problems.  Recently, massively parallel DNA sequencing technologies \cite{mardis2008next}, such as Illumina  \cite{illumina} and SOLiD \cite{solid}, have been reducing the cost  significantly. The price for Human Whole Genome Sequencing at a $30X$ coverage has dropped to $\$3,750$ (www.knome.com).  The massive amount of short reads (short sequences with symbols $A, C, G, T$) generated by these next-generation techniques \cite{mardis2008next} quickly dominate the scene.  How to manage and process the Big Sequence Data becomes a database issue.

A key problem in genome sequencing is assembling massive short reads that are extracted from DNA segments.  The number of short reads can easily reach one billion; and the length of each read varies from a few tens of bases to several hundreds. Figure \ref{fig:assembly} shows a sequence assembly process, where three short sequences are assembled to a longer sequence based on their overlaps.

\begin{figure}[h]
\centering
\includegraphics[width=0.3\textwidth]{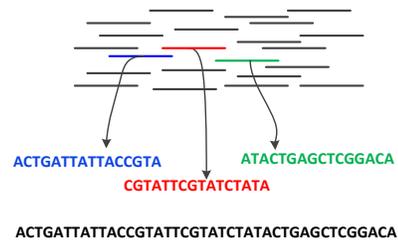}
\caption{Sequence Assembly}
\label{fig:assembly}
\end{figure}

The above process, called De novo assembly, has been extensively studied in the past decade.  There are two kinds of approaches: the overlap-layout-consensus approach \cite{myers2000whole, platt2009forge}, and the de Bruijn graph approach \cite{pevzner2001eulerian, zerbino2008velvet, simpson2009abyss, butler2008allpaths, li2010novo}. The overlap-layout-consensus approach builds an overlap graph between short reads. Due to the sheer size of the overlap graph (each read can overlap with many other reads), this approach is more suitable for small genomes.  The de Bruijn graph approach breaks short reads to k-mers (substring of length k) and then connects k-mers according to their overlap relations in short reads.  It can assemble larger quantities (e.g., billions) of short reads with greater coverage.

Despite their popularity, large memory consumption is a bottleneck for both approaches \cite{miller2010assembly}.  For the short read sequences generated from mammalian-sized genome, algorithms such as Euler \cite{pevzner2001eulerian}, Velvet \cite{zerbino2008velvet}, AllPaths \cite{butler2008allpaths} and SOAPdenovo  \cite{li2010novo} have to consume hundreds of gigabytes memory.  Figure \ref{fig:eachstep} shows a breakdown of memory and runtime consumption in SOAPdenovo  \cite{li2010novo} on a 258.7 GB Cladonema short read dataset and a 137.5 GB Lake Malawi cichlid (fish) short read dataset\footnote{A de Bruijn graph based assembly process consists of six steps: error correction (optional), de Bruijn graph construction, contig generation, reads remapping, scaffolding (optional) and gap closure (optional). The last two steps are applicable when pair end information is available.}. Obviously, the most memory consuming and time intensive part is the de Bruijn graph construction step.  Similar results were also reported for other datasets \cite{li2010novo}.  In this work, we resort to a novel disk-based approach to tackle this bottleneck, using less than 10 gigabytes memory, without runtime slowdown.

\begin{figure}[h]
\centering
\subfigure[\small\textit{Peak Memory}]{\label{fig:mem_step}\includegraphics[width=0.235\textwidth]{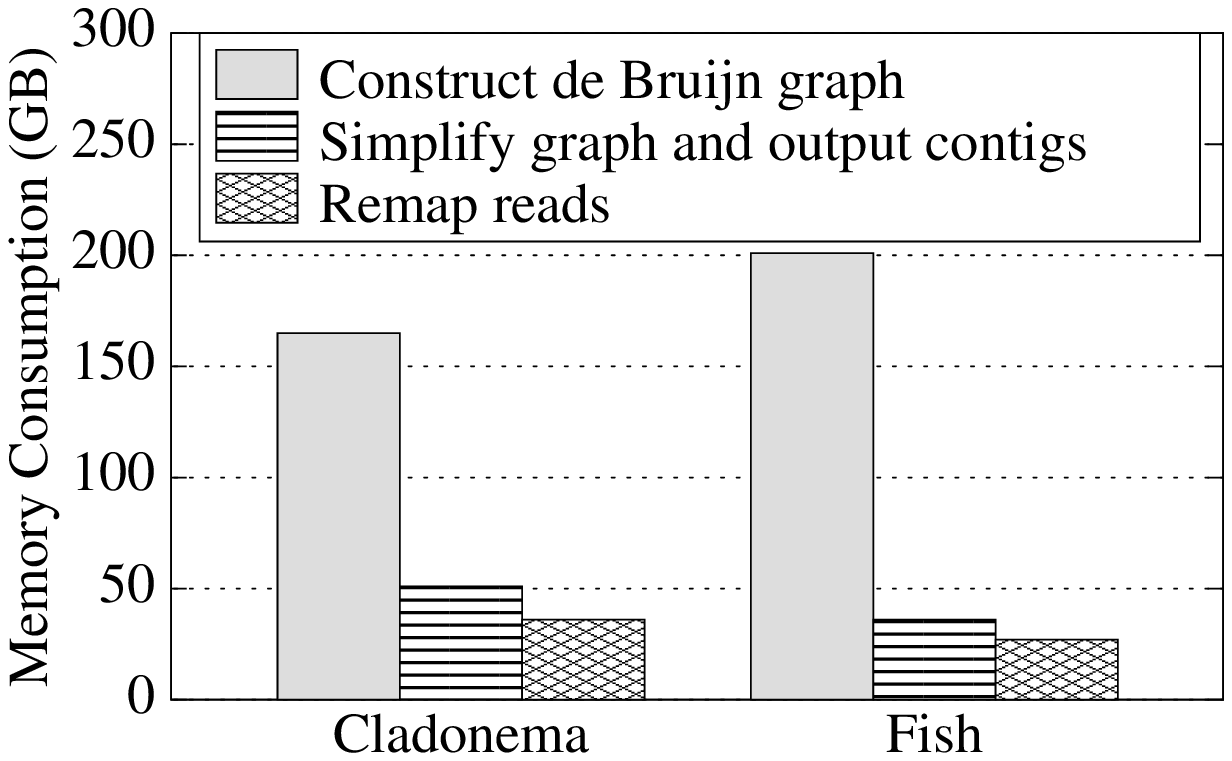}}
\subfigure[\small\textit{Running Time}]{\label{fig:time_step}\includegraphics[width=0.235\textwidth]{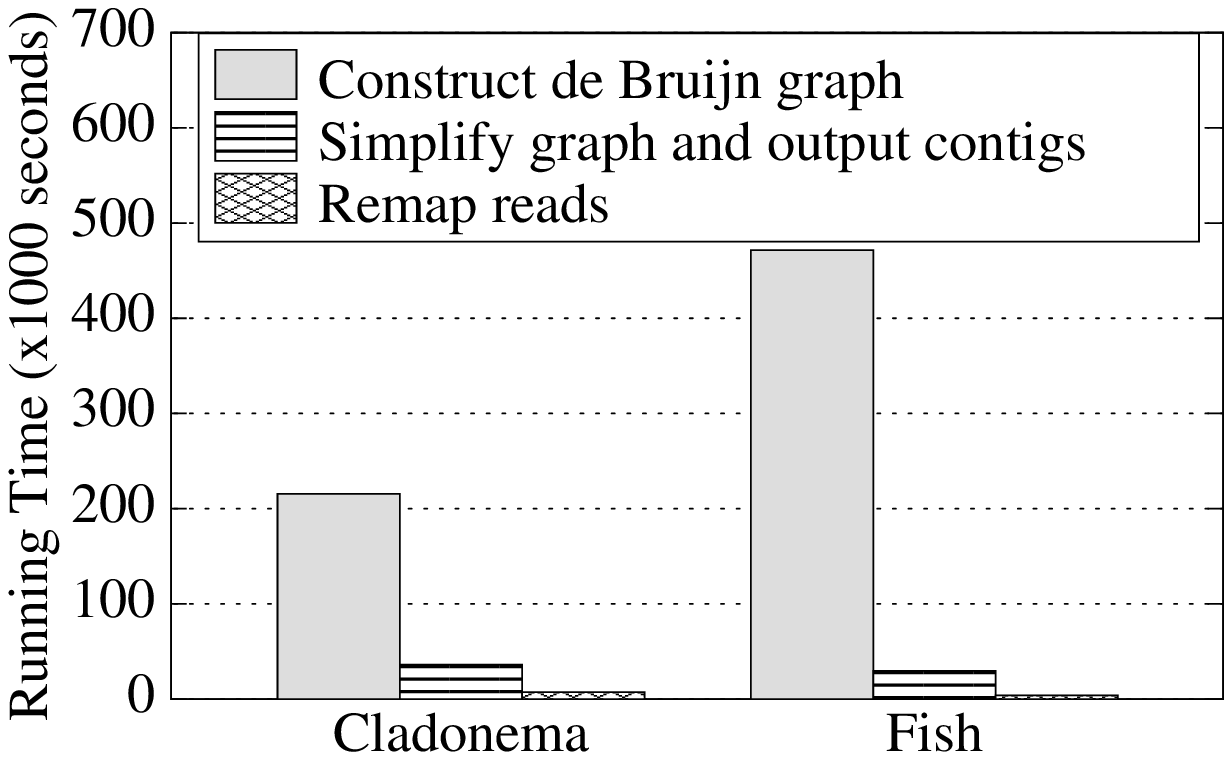}}
\caption{SOAPdenovo: Statistics of Computational Complexity at Each Assembly Step}
\label{fig:eachstep}
\end{figure}

In a de Bruijn graph, each vertex represents a k-mer.  In order to build the graph, we have to identify the same k-mers scattered in different short reads. A straightforward solution is to build a hash table. We can encode each symbol, A, C, G, and T using 2 bits.  In the aforementioned 137.5 GB fish datasets (the read length is 101), when $k = 59$, there are about 11.8 billion distinct k-mers including reverse complements. Assuming a load factor of $2/3$ for the hash table, we could expect the hash table to take nearly 283 GB memory, which is too large.

Alternatively, one can apply a disk-based partition-merge approach, which is popular in databases.  Given a set of short reads $S$, there are two classic scatter-gather methods to identify duplicate k-mers: (1) partition $S$ horizontally into disjoint subsets, $S_1, S_2, \ldots, S_t$, for each subset $S_i$, generate a hash table $H_i$ of their k-mers in main memory, output a sorted copy $H_i$ to disk, and then  merge $H_1, H_2, \ldots, H_t$; (2) partition all k-mers from $S$ into disjoint subsets $S_1, S_2, \ldots, S_t$ based on their last few symbols, for each subset $S_i$, create a hash table $H_i$, build a k-mer mapping and output  $H_i$ to disk, and then combine them. Both methods do not require a large amount of memory; but they are slow.  The first solution, requiring multiple disk scans and sorts, is hopeless.  The second one has to generate a huge number of k-mers in the first step.  For the 258.7 GB Cladonema dataset, with k = 59, the disk file of k-mers is close to 3TB and the time used to finish duplicate mapping is around 30 hours.

In this paper, we re-examine the second scatter-gather approach and find a drawback existing in its k-mer partitioning strategy.  Many k-mers generated from the same short read, though having large overlaps inside, are distributed to different partitions, which caused huge overhead.  Inspired by this discovery, we introduce a new concept, called \emph{minimum substring partitioning} (MSP).  MSP breaks short reads to pieces larger than k-mers; each piece contains k-mers sharing a common minimum substring with fixed length $p$, $p\leq k$.  The effect is equivalent to compressing consecutive k-mers using the original sequences.  We demonstrate that this compression approach does not introduce significant computational overhead, but could lead to 10-15 times smaller partitions, thus improving performance dramatically. It is observed that the size of MSP partitions is only slightly larger than the original sequences.  Based on a random string model, we analytically derive the expected size of minimum substring based partitions, which is reduced from $\Theta(kn)$ to $\Theta(n)$, where $n$ is the size of the short read database, and $k$ is the length of a $k$-mer.  Furthermore, we prove that the size of the largest partitions decreases exponentially with respect to $p$, indicating that it is very memory-efficient. When $p=12$,  the memory consumption is less than 10G for all the real datasets we tested.

Our main contribution is the development of an innovative disk-based partitioning strategy for solving a critical graph construction problem in genome sequence assembly.  Our solution is disk-based, using a small amount of memory without runtime performance loss. To the best of our knowledge, our study is the first work that introduces minimum substring partitioning, studies its properties, and successfully applies it to de novo sequence assembly, a critical problem in genome analysis.  Experimental results show that our method can build de Bruijn graphs using a commodity computer for any large-volume sequence dataset.

\section{Preliminaries}

\begin{definition}[Short Read, K-Mer]
A short read is a string over alphabet $\Sigma$.   A $k$-mer is a string whose length is $k$. Given a short read $s$, $s[i,j]$ denotes the substring of $s$ between the $i_{th}$ and $j_{th}$ (both inclusive) elements.  $s$ can be broken into $m-k+1$ $k$-mers, written as $s[1, k]$, $s[2, k+1], \ldots$, $s[m-k+1, m]$. K-mers  $s[i, k+i-1]$, $s[i+1, k+i]$ are called adjacent in $s$.
\end{definition}

For a short read $s$, we can view k-mers generated in a way that a window with width $k$ slides through $s$. Two k-mers, $\alpha$ and $\beta$, are adjacent from $\alpha$ to $\beta$ if and only if the last $k-1$ substring of $\alpha$ is the first $k-1$ substring of $\beta$.  Let $S$ be a short read set $S=\{s_i\}$.  A k-mer extracted from $s_i$, $s_i[j, j+k-1]$, is written as $s_{i,j}$.

\begin{definition}[De Bruijn Graph]
Given a short read set $S=\{s_i\}$, a de Bruijn graph $G=\{V, E\}$ is constructed by creating a vertex for every distinct k-mer in $S$ and connecting two vertices with a directed edge if their corresponding k-mers are adjacent in at least one short read.
\end{definition}

\begin{figure}[h]
\centering
\epsfig{file=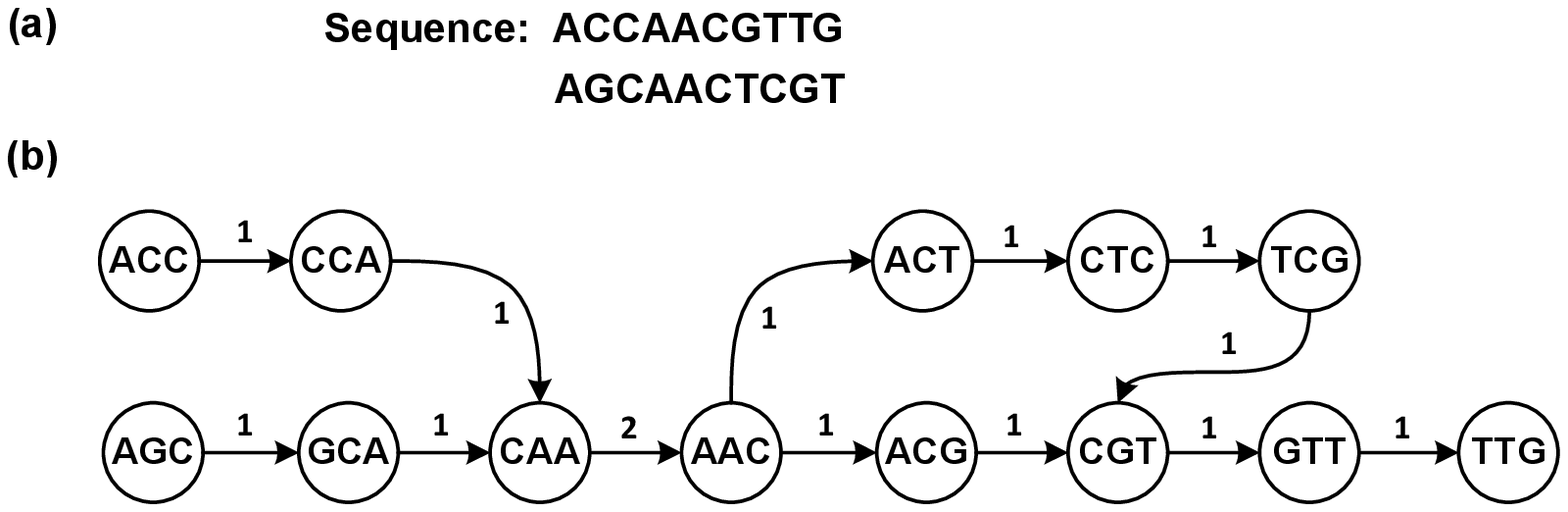, width=0.47\textwidth}
\caption{A de Bruijn Graph Example: k=3}
\vspace{-3mm}
\label{fig:debruijn}
\end{figure}

Figure \ref{fig:debruijn} shows a de Bruijn graph generated from two short reads with $k$ being $3$.  The edge weight shows the number of times the two adjacent k-mers appear in short reads. For sake of simplicity, we do not depict the k-mers generated by the reverse complements of short reads (see details in Section \ref{sec:reverse}).

\subsection {K-mer Mapping}
Given a short read dataset, in order to build a de Bruijn graph, one has to map all the duplicate k-mers derived from different short reads into the same vertex.  If vertices are assigned with integer id's, e.g., starting at $1$, this is equivalent to mapping duplicate k-mers to the same id.  This process is called \emph{K-mer Mapping}. Once the mapping is built, by scanning the short reads, we can create the edge set for the de Bruijn graph naturally. Therefore, the task of building a de Bruijn graph is narrowed down to k-mer mapping and edge sequence generation.

%

\subsection{Scatter/Gather}
One solution to the memory bottleneck issue is to chunk data to several partitions and process them separately \cite{Teuhola93externalduplicate, teuhola1991minimal}.  In this section, we discuss two scatter/gather approaches derived from duplicate detection techniques and then show their space complexity.  The first solution is called Horizontal Partition (\HPartition).

\begin{enumerate}
\item Divide short read dataset $S$ to disjoint partitions with equal size, $S_1$, $S_2$, $\ldots$, $S_t$, such that each partition can be loaded into memory.

\item For each partition $S_i$, insert $k$-mers into a hash table $H_i$. Based on the insertion order, assign an increasing integer id, starting at $1$, to each distinct $k$-mer. Let $M_i$ be the k-mer mapping function in $S_i$.  $M_i$ is local.

\item For each partition $S_i$, output all k-mers $s_{i, j}$ (in this case, we need to output the k-mer itself and its index, $(i, j)$) together with the assigned id, in increasing order of $(i, j)$.  Let $P_i$ be the output sequences.

\item Merge $\{P_i\}$ to generate a global mapping function $M$ such that it satisfies the following constraint.  For any k-mer $\gamma$ extracted from partition $S_j$, let $S_i$ be the partition with the smallest $i$ that contains $\gamma$, then $M(\gamma)=M_i(\gamma)$.

\end{enumerate}

The output size of Step 3 is $\Theta(kn)$, where $n$ is the size of the short read database, and $k$ is the k-mer's length. Step 4 in \HPartition\ is costly.  It needs a sort/merge process to identify the duplicate k-mers in different partitions.

The main issue of \HPartition\ arises from the fact that the multiple occurrences of the same k-mer are not located in the same partition.  To overcome this issue, one common strategy is to do bucket partitioning.  Let $H$ be a hash function of k-mer.  We can generate $t$ partitions by distributing k-mer $s_{i,j}$ to the $H(s_{i,j})\mod t$ partition.  We can also use k-mers' last several symbols to scatter them into different partitions. This classic approach is called Bucket Partition (\BPartition).

\begin{enumerate}
\item Extract all k-mers from $S$ and put them to disjoint partitions, $S_1, S_2, \ldots, S_t$, according to $H(s_{i,j}) \mod t$.

\item For each partition $S_i$, insert $k$-mers into a hash table $H_i$ and assign an increasing integer id, starting at $\Sigma_{j=1}^{i-1} |S_j|$, to each distinct $k$-mer based on the insertion order, where $|S_j|$ is the number of distinct $k$-mers in partition $S_j$.  Let $M$ be this k-mer mapping function.  It is clear that $M$ is a global mapping function: each distinct k-mer in $S$ will have one unique id.

\item For each partition $S_i$, output all k-mers $s_{i, j}$ (in this case, we only need to output the index $(i,j)$, not the k-mer string) together with the assigned id, in increasing order of $(i, j)$.  Let $P_i$ be the output sequences.

\item Merge $\{P_i\}$ in increasing order of $(i, j)$.

\end{enumerate}

While Step 4 in \BPartition\ is much faster than that in \HPartition, the total size of all the partitions is the same $\Theta(kn)$, which could easily reach multiple terabytes for a large genome. In the following discussion, we introduce a new partitioning concept, minimum substring partitioning (MSP), that reduces the partition size to $\Theta(n)$.

\section{Minimum Substring Partitioning}
\label{sec:msp}
Bucket partitioning has high overhead since adjacent k-mers are likely distributed to different partitions, unless $H(s_{i,j}) \mod t$ $=$ $H(s_{i,j+1}) \mod t$. Karp and Rabin \cite{karp1987efficient} proposed a rolling hash function with the property that the hash value of consecutive k-mers can be calculated quickly.  However, it is unknown whether there exists such a hash function that with high probability, two adjacent k-mers could be mapped to the same partition.  In this study, we resort to another approach to bypass this problem.

\begin{definition} [Minimum Substring\cite{Robe+04}]
Given a string $s$, a length-p substring $r$ of $s$ is called the minimum p-substring (or pivot substring) of $s$, if $\forall s'$, $s'$ is a length-p substring of $s$, s.t., $r\leq s'$ ($\leq $ defined by lexicographical order). $s$ is said to be covered by $r$. The minimum p-substring of $s$ is written as $min_{p}(s)$.
\end{definition}

\begin{figure}[h]
\centering
\epsfig{file=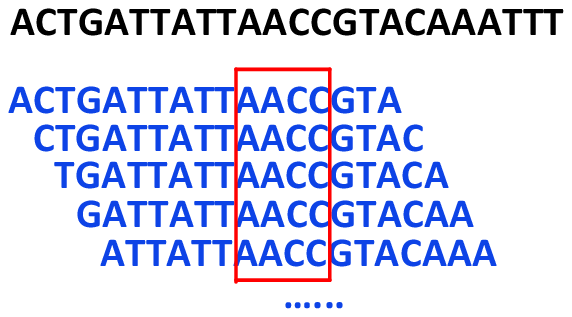, width=0.25\textwidth}
\caption{Minimum Substring Partitioning}
\label{fig:msp}
\end{figure}

Since two adjacent k-mers overlap with length $k-1$ substring, the chance for them to have the same minimum $p$-substring ($p < k$) could be very high. Figure \ref{fig:msp} illustrates that the first $5$ k-mers have the same minimum $4$-substring, $AACC$.  In this case, instead of generating these $5$ k-mers separately, one can just compress them using the original short read, to $ACTGATTATTAACCGTACAAA$, and output it to the partition corresponding to the minimum $4$-substring $AACC$. Formally speaking, given a short read $s=s_1 s_2 \ldots s_m$, if the adjacent $j$ k-mers from $s[i, i+k-1]$ to $s[i+j-1, i+j+k-2]$ share the same minimum $p$-substring $r$, then one can just output substring $s_i s_{i+1} \ldots s_{i+j+k-2}$ to partition $H(r) \mod t$ without breaking it to $j$ k-mers.  If $j$ is large, this compression strategy will dramatically reduce the partition size and runtime.

\begin{definition}[Minimum Substring Partitioning]
\label{def:msp}
Given a string $s=s_1 s_2 \ldots s_m$, $p\leq k \leq m$, minimum substring partitioning breaks $s$ to substrings with maximum length $\{s[i, j]| i+k-1\leq j, 1\leq i, j\leq m\}$, s.t., all k-mers in $s[i,j]$ share the same minimum p-substring.  $s[i,j]$ is also called super k-mer.
\end{definition}

According to minimum substring partitioning, larger $p$ will likely break a sequence to several segments with different minimum $p$-substrings, thus increasing the total partition size.  On the other hand, a smaller $p$ will produce larger partitions that might not fit in the main memory. 

\begin{figure}[h]
\centering
\epsfig{file=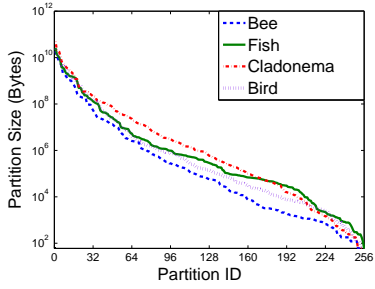, width=0.3\textwidth}
\vspace{-2mm}
\caption{Partition Size Distribution}
\vspace{-2mm}
\label{fig:distribution}
\end{figure}
Figure \ref{fig:distribution} shows the distribution of partition size with $p=4$ on the bee, fish, cladonema and bird datasets (ref. to Table 1 for details). The partitions are sorted according to their sizes. There are several large dominating partitions.  The value of $p$ determines the total size of partitions and the expected size of the largest partitions. In the following discussion, using a random string model, we prove that the expected total partition size is $\Theta (n)$, far smaller than $\Theta (kn)$ in \HPartition\ and \BPartition.  We will further show the lower and upper bound of the largest partition in MSP, which decreases exponentially with respect to $p$, indicating that MSP is very memory-efficient.

\subsection{Total Partition Size}

Let $l$ be the average number of breaks that MSP introduces in a given sequence dataset. That is, on average, MSP adds $l$ breaks to a sequence and divides it into multiple substrings $s[i_1, j_1]$, $s[i_2, j_2]$, $\ldots$, $s[i_{l+1}, j_{l+1}]$.  Let $m$ be the length of individual short reads. Suppose there are $n/m$ short reads, i.e., $n$ is the dataset size. We have the following theorem.

\begin{theorem}
\label{thrm:partitiontotalsize}
The total partition size is $\Theta (\frac{lk}{m}n+n)$.
\end{theorem}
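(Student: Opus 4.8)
The claim is a purely combinatorial identity once $l$ is given, so the plan is to argue per short read and then aggregate. Fix a short read $s$ of length $m$ and suppose MSP introduces $l_s$ breaks in it, cutting $s$ into super $k$-mers $s[i_1,j_1],\dots,s[i_{l_s+1},j_{l_s+1}]$. The key local observation is that, by Definition~\ref{def:msp}, each super $k$-mer is a maximal block of \emph{consecutive} $k$-mers of $s$ sharing a common minimum $p$-substring, so a super $k$-mer that contains exactly $t$ of the length-$k$ windows of $s$ spans $t+k-1$ characters of $s$. Equivalently, one can count what each break duplicates: when MSP cuts between the adjacent $k$-mers $s[b,b+k-1]$ and $s[b+1,b+k]$ (their minimum $p$-substrings differ), the left piece ends at position $b+k-1$ and the right piece starts at position $b+1$, so exactly the $k-1$ overlapping characters $s[b+1],\dots,s[b+k-1]$ are emitted twice. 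Since the $l_s+1$ super $k$-mers partition the $m-k+1$ windows of $s$ into consecutive runs of sizes $t_1,\dots,t_{l_s+1}$ with $\sum_i t_i = m-k+1$, the total number of characters emitted for $s$ is $\sum_{i=1}^{l_s+1}(t_i+k-1) = (m-k+1)+(l_s+1)(k-1) = m + l_s(k-1)$ — i.e.\ $m$ original characters plus $k-1$ extra per break, as the duplication view predicts.

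Next I would aggregate over the $n/m$ reads. Since $l$ is by definition the average number of breaks per read, $\sum_{\text{reads}} l_s = \tfrac{n}{m}\,l$, so the total character content written into the partitions is $\sum_{\text{reads}}\bigl(m + l_s(k-1)\bigr) = n + \tfrac{n}{m}\,l\,(k-1)$. Each emitted super $k$-mer additionally carries $O(1)$ words of bookkeeping (e.g.\ its originating read index and length), contributing a further $O\!\bigl(\tfrac{n}{m}(l+1)\bigr)$ in total; this is dominated by the character count because $m\ge 1$ and $k\ge 1$. Finally, using $k\ge 2$ so that $k-1=\Theta(k)$, the total partition size is $\Theta\!\bigl(n + \tfrac{lk}{m}n\bigr) = \Theta\!\bigl(\tfrac{lk}{m}n + n\bigr)$, which also matches the $\Theta(kn)$ bound of \HPartition/\BPartition\ in the extreme case $l = \Theta(m)$.

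\textbf{Main obstacle.} There is no hard analytic step; the only care required is the bookkeeping — verifying that two consecutive super $k$-mers overlap in precisely $k-1$ positions (so the per-break overhead is exactly $k-1$, not $k$ or $k-2$), checking that the $O(1)$-per-super-$k$-mer metadata can never dominate the bound, and recording the mild assumption $k\ge 2$ that lets us replace $k-1$ by $\Theta(k)$.
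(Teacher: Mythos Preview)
Your proposal is correct and follows essentially the same approach as the paper: the paper's proof is a two-line version of yours, observing that each of the $\tfrac{n}{m}l$ breaks duplicates a $(k-1)$-symbol overlap and hence adds $\Theta(k)$ to the base size $n$. Your per-read computation $m + l_s(k-1)$ and the handling of the $k-1$ versus $k$ distinction and metadata overhead simply make explicit what the paper leaves to the reader.
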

\begin{proof}
Each break introduces a substring that overlaps its previous substring with $k-1$ symbols. We have $\frac{n}{m}l$ breaks.  Hence, the total partition size is $\Theta (\frac{lk}{m}n+n)$.
\end{proof}

\newcommand{\hs}{\hspace{-0.05cm}}

Assume a random string model with four symbols $0$, $1$, $2$, and $3$, each having equal probability to occur.  We first use a simulation method to demonstrate the average number of breaks for $1M$ short reads with length $m=100$.

\begin{figure}[h]
\centering
\subfigure[\small\textit{$p$ and $k$ }]{\label{fig:estimation-breaks-pk}\includegraphics[width=0.23\textwidth]{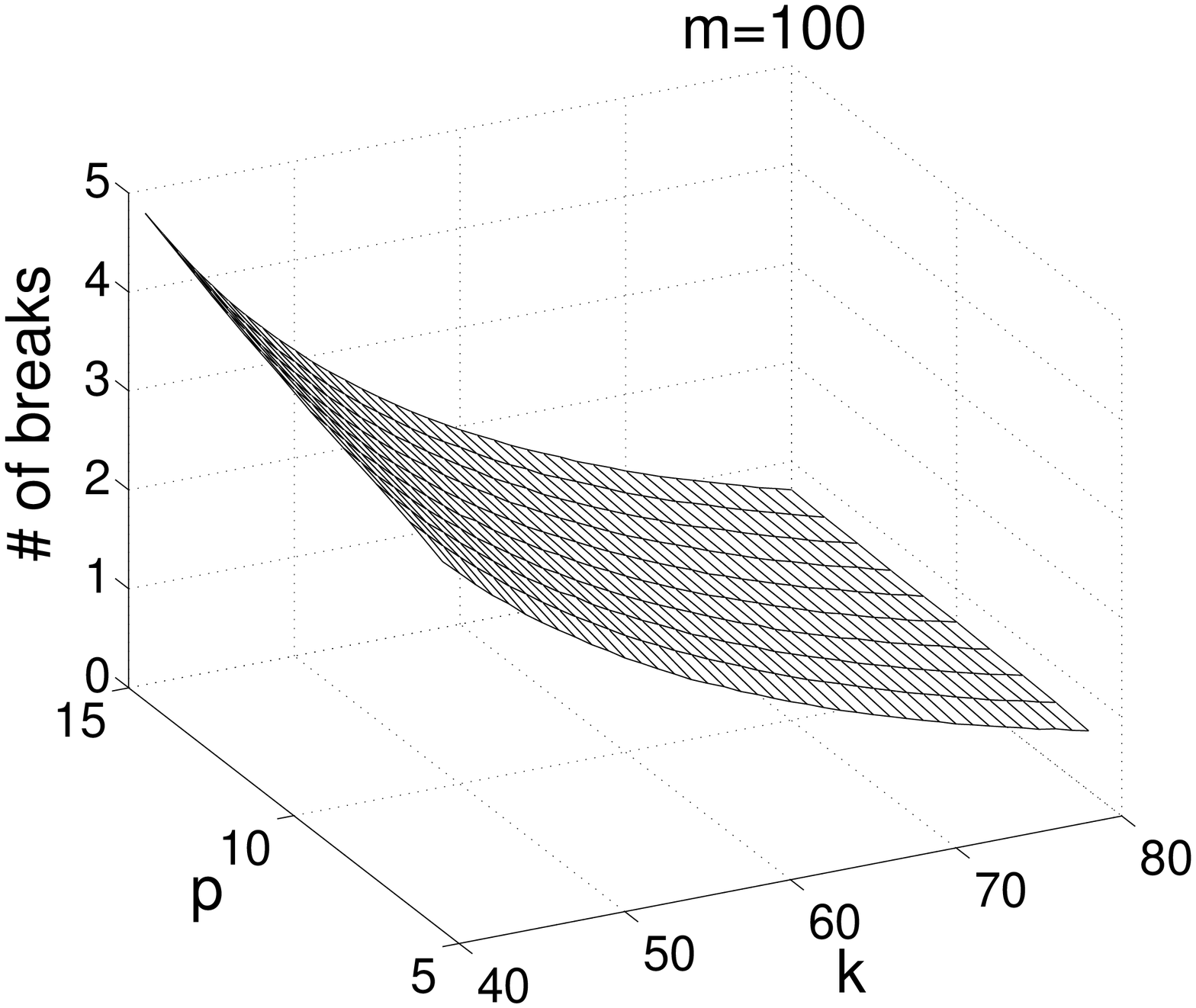}}
\subfigure[\small\textit{$m$}]{\label{fig:estimation-breaks-m}\includegraphics[width=0.23\textwidth]{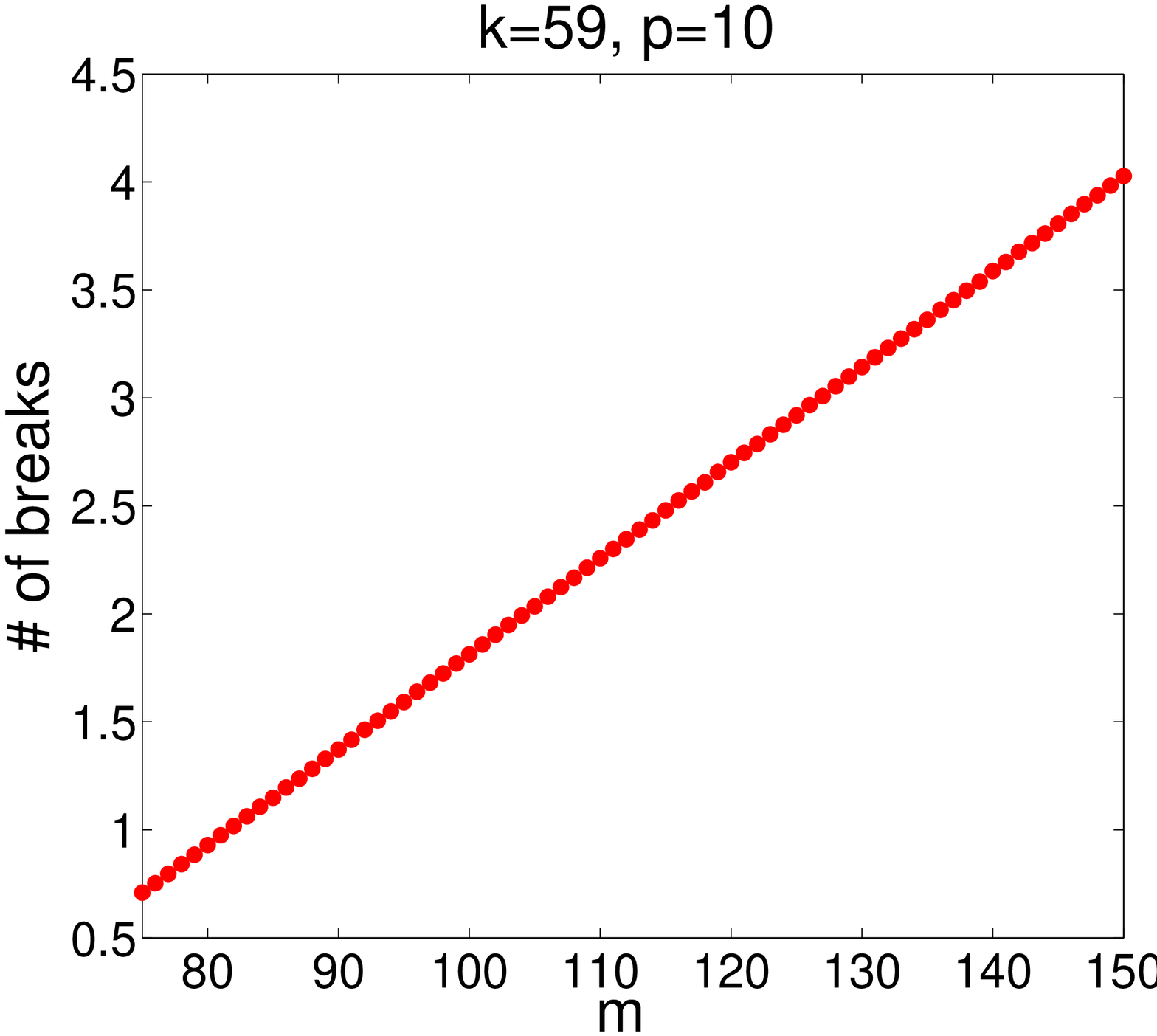}}
\caption{Average Number of Breaks}
\label{fig:estimation-breaks}
\end{figure}

Figure \ref{fig:estimation-breaks-pk} shows the expected number of breaks with respect to different $p$ and $k$ values.  When $p$ increases, the number of breaks increases.  When $k$ increases,  the number of breaks decreases.  Figure \ref{fig:estimation-breaks-m} shows the expected breaks of short reads with respect to different $m$ values, with $p=10$ and $k=59$.  It is observed that the average number of breaks increases proportionally with respect to $m$. We prove this in the following theorem.

\begin{theorem}\label{lp1m}
Let $l(m,k,p)$ be the average number of breaks under minimum substring partitioning. In a random string model, $l(m,k,p)\propto (m-k)$.
\end{theorem}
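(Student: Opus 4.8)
The plan is to show that the expected number of breaks is a linear function of the number of "break opportunities," and there are exactly $m-k$ such opportunities in a read of length $m$. A break occurs between consecutive $k$-mers $s[i,k+i-1]$ and $s[i+1,k+i]$ precisely when $\min_p(s[i,k+i-1]) \neq \min_p(s[i+1,k+i])$; since there are $m-k$ such adjacent pairs in a read, I would write $l(m,k,p) = \sum_{i=1}^{m-k} \Pr[B_i]$ by linearity of expectation, where $B_i$ is the event that a break happens at position $i$.

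The key observation is that, under the random string model where every symbol is i.i.d.\ uniform on the four-letter alphabet, the event $B_i$ depends only on the window of $k+1$ consecutive symbols $s[i,k+i]$, and by translation invariance of the i.i.d.\ model its probability does not depend on $i$ — call it $q(k,p)$, a constant independent of $m$. Two adjacent $k$-mers sharing a window of $k+1$ letters have different minimum $p$-substrings exactly when the lexicographically smallest $p$-substring in that $(k+1)$-window is unique and sits at one of the two extreme positions (so that dropping the first letter or the last letter removes it); $q(k,p)$ is the probability of that configuration, which is some fixed number in $(0,1)$ for $p \le k$. Hence $l(m,k,p) = (m-k)\,q(k,p)$, which gives $l(m,k,p) \propto (m-k)$ with proportionality constant $q(k,p)$ depending only on $k$ and $p$.

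One subtlety I would address: consecutive break events $B_i$ and $B_{i+1}$ are not independent (their defining windows overlap in $k$ letters), but linearity of expectation does not require independence, so the sum $\sum_i \Pr[B_i]$ is still exactly $\mathbb{E}[\text{number of breaks}]$; the correlations would only matter for a variance computation, which is not needed here. I should also note that positions near the two ends of the read are not special under this counting — every one of the $m-k$ adjacent $k$-mer pairs lies entirely inside the read, so the window $s[i,k+i]$ is always well-defined and the uniform bound $\Pr[B_i]=q(k,p)$ applies to all of them.

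The main obstacle is not the linearity argument itself but making the claim "$\Pr[B_i]$ is independent of $i$ and of $m$" airtight: one must be careful that $B_i$ truly depends only on the local $(k+1)$-letter window and nothing outside it, and that the random string model is exactly i.i.d.\ uniform (so translation invariance holds). Once that locality is established, the result $l(m,k,p) = (m-k)\,q(k,p) \propto (m-k)$ follows immediately; an exact closed form for $q(k,p)$ is not required for the proportionality statement, though it is what the later bounds on partition sizes will quantify.
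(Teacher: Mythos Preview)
Your proposal is correct and takes essentially the same approach as the paper: both arguments identify a single per-position break probability (your $q(k,p)$ is exactly the paper's $P_1(k,p)$, the probability that in a $(k+1)$-letter window the first or last $p$-substring is the unique minimum) and multiply it by the $m-k$ adjacent $k$-mer pairs. The only cosmetic difference is that the paper phrases the summation as a telescoping recurrence $l(m,k,p)=l(m-1,k,p)+P_1(k,p)$, whereas you invoke linearity of expectation directly; the content is identical.
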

\begin{proof}
It is trivial to have $l=0$ when $m=k$, because the whole string has no break in this situation. Consider the difference between $l(m, k, p)$ and $l(m-1, k, p)$. In an $m$ length string, let $P_1(k,p)$=Pr$\{$the minimum $p$-$substring$ of the last $k$-$mer$ is different from the second last one$\}$. This equals to $P_1(k,p)$=Pr$\{$the first or the last p-substring is the only smallest p-substring$\}$. Since $P_1$ is only related to the last $k+1$ characters, it is not related to $m$. Then we have,
\begin{eqnarray*}
l(m, k, p)=l(m-1, k, p)+P_1(k,p)\\
= \cdots =P_1(k,p)\cdot (m-k).
\end{eqnarray*}
\end{proof}

Theorem \ref{lp1m} told us that $l$ increases proportionally with respect to $m-k$, with a ratio of $P_1$. Now we examine the bound of $P_1(k,p)$.

\begin{figure}[h]
\centering
\epsfig{file=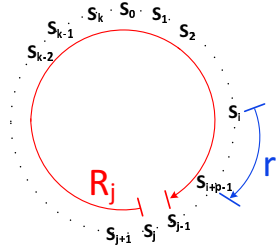, width=0.2\textwidth}
\caption{Illustration of Theorem 3.3}
\label{fig:illustration_theorem}
\end{figure}

\begin{theorem}
\label{thrm:totalsize}
In a random string model, $P_1(k,p)\leq \frac{p+1}{k+1}$.
\end{theorem}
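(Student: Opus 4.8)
The plan is to rewrite the event behind $P_1(k,p)$ as a statement about the two ends of the length-$(k+1)$ window, and then bound that by a rotational‑symmetry count on a ring (the picture in Figure~\ref{fig:illustration_theorem}). Write the $k+1$ characters of the window as $c_1c_2\cdots c_{k+1}$, put $N:=k-p+2$, and let $w_i=c_i\cdots c_{i+p-1}$ for $1\le i\le N$ be its length-$p$ substrings; the second-to-last $k$-mer owns $w_1,\dots,w_{N-1}$ and the last $k$-mer owns $w_2,\dots,w_N$. A short case split on $\min(w_1,\dots,w_{N-1})$ versus $\min(w_2,\dots,w_N)$ shows that these minimum \emph{values} disagree precisely when $w_1$ is strictly smaller than each of $w_2,\dots,w_N$, or $w_N$ is strictly smaller than each of $w_1,\dots,w_{N-1}$, and these two events are disjoint. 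Hence $P_1(k,p)=\Pr\{w_1\text{ is the only smallest of }w_1,\dots,w_N\}+\Pr\{w_N\text{ is the only smallest of }w_1,\dots,w_N\}$, which is exactly the ``first or last $p$-substring is the only smallest'' event in the statement.

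Next, close the window into a ring of length $k+1$ and look at its $k+1$ cyclic length-$p$ substrings $v_1,\dots,v_{k+1}$ (indices mod $k+1$); note $v_i=w_i$ for $i\le N$, while $v_{N+1},\dots,v_{k+1}$ are the $p-1$ ``wrap-around'' substrings. Pick, by any rotation-equivariant rule that always returns a single position, a canonical minimizer $J^{*}\in\{1,\dots,k+1\}$ of $v_1,\dots,v_{k+1}$ (e.g.\ the position whose full length-$(k+1)$ rotation is lexicographically least, with periodic-ring ties settled by independent uniform tags). Since the characters are i.i.d.\ uniform, the law of the ring is rotation-invariant, so the $k+1$ disjoint events $\{J^{*}=i\}$ are equiprobable and therefore each has probability $\tfrac1{k+1}$. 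If $w_1$ is the only smallest among the linear substrings, then $v_1$ is a \emph{strictly} smaller string than each of $v_2,\dots,v_N$, so none of the positions $2,\dots,N$ can be ranked first by any tie-break; hence $J^{*}\in\{1\}\cup\{N+1,\dots,k+1\}$, a set of $p$ positions. Symmetrically, $w_N$ being the only smallest linear substring forces $J^{*}\in\{N\}\cup\{N+1,\dots,k+1\}$. Those two position sets have union of size $(p-1)+2=p+1$, so $P_1(k,p)\le\Pr\{J^{*}\in\text{that union}\}=(p+1)\cdot\tfrac1{k+1}=\tfrac{p+1}{k+1}$.

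The whole content is these two steps, and the place that needs the most care — the natural spot for an error — is insisting on a genuinely well-defined and rotation-symmetric notion of ``the'' cyclic minimizer: over a $4$-letter alphabet the minimum \emph{value} among $v_1,\dots,v_{k+1}$ can be attained at several positions (two wrap-around substrings may coincide, with probability $O(4^{-p})$; the ring may even be periodic), so the uniform-$\tfrac1{k+1}$ claim would be false without the tie-break, yet the tie-break must not cost the symmetry. The only companion point is checking that the first-paragraph reduction really isolates \emph{strict} inequalities at the two ends — that is what makes a bare string comparison (oblivious to the tie-break) enough to pin $J^{*}$ into the correct $p$-element set. Once those are in place the bound follows from the bare count ``a change in the sliding minimum can be blamed on at most $p+1$ of the $k+1$ ring positions.''
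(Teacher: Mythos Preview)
Your argument is correct and is essentially the paper's ring-rotation argument: both exploit that, on the cyclic window of length $k+1$, at most $p+1$ of the $k+1$ rotational positions can host a ``first-or-last $p$-substring is the unique minimum'' configuration. The paper phrases this as a straight double count over strings (for each ring, at most $p+1$ of its $k+1$ rotations lie in $S_{k,p}$, and rotation is a bijection on $\Sigma^{k+1}$), which lets it simply \emph{fix} one cyclic minimizer and thereby sidestep the tie-breaking apparatus you build for your random variable $J^{*}$.
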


\begin{proof}
Given any string $s=s_0s_1 \ldots s_k$, we concatenate $s_k$ and $s_0$ to form a ring as depicted in Figure \ref{fig:illustration_theorem}.  The ring can generate $k+1$ length-($k+1$) strings by starting at different positions: $R_j$=$s_js_{j+1}\ldots s_{(j+k) mod (k+1)}$ $j=0,1,\ldots,k$. Let $S_{k,p}=\{$  length-$(k+1)$ string whose first or last p-substring is the only minimum p-substring in it $\}$. We have $P_1(k,p)=|S_{k,p}|/4^{k+1}$.  Now we calculate at most how many $R_i$ strings belong to $S_{k,p}$.  Let $r$ be one of the minimum p-substrings among all of the p-substrings in $\{R_i\}$.  For any $R_i$, if $r$ is located inside $R_i$ (neither in the head nor the tail), then $R_i$ does not belong to $S_{k,p}$.  In total, there are $k-p$ $R_i$'s satisfying this condition.  So in these $k+1$ $R_i$ strings, at most $p+1$ of them can possibly belong to $S_{k,p}$. This gives us $P_1(k,p)\leq \frac{p+1}{k+1}$.
\end{proof}

\begin{corollary}
In a random string model, the total partition size is $O (pn)$.
\end{corollary}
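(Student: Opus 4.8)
The plan is to chain together the three preceding results and then bound one elementary expression. Theorem~\ref{thrm:partitiontotalsize} already writes the total partition size as $\Theta\!\left(\frac{lk}{m}n + n\right)$, so it suffices to show that the coefficient $\frac{lk}{m}$ is $O(p)$; the additive $n$ is then harmless since $p \ge 1$.

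First I would apply Theorem~\ref{lp1m} to substitute $l = P_1(k,p)\,(m-k)$, which turns the coefficient into $\frac{P_1(k,p)\,(m-k)\,k}{m}$. Next I would invoke the bound $P_1(k,p) \le \frac{p+1}{k+1}$ from Theorem~\ref{thrm:totalsize}, so that
\[
\frac{lk}{m} \;\le\; \frac{(p+1)(m-k)k}{(k+1)m} \;=\; (p+1)\cdot\frac{m-k}{m}\cdot\frac{k}{k+1}.
\]
Because $p \le k \le m$ by Definition~\ref{def:msp}, the two fractional factors $\frac{m-k}{m}$ and $\frac{k}{k+1}$ are each at most $1$, so the right-hand side is at most $p+1$. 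Plugging this back into Theorem~\ref{thrm:partitiontotalsize} gives a total partition size of $O\!\left((p+1)n + n\right) = O(pn)$, using $p \ge 1$ in the last step.

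The argument is little more than substitution and elementary inequalities, so I do not expect a genuine obstacle; the only points requiring (minor) care are verifying that the constraint $p \le k \le m$ forces both fractional factors below $1$, and that $p \ge 1$ lets the coefficient $p+1$ and the stray additive $n$ both be folded into the bound $O(pn)$.
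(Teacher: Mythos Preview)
Your proposal is correct and follows essentially the same route as the paper: chain Theorem~\ref{thrm:partitiontotalsize}, Theorem~\ref{lp1m}, and Theorem~\ref{thrm:totalsize}, then bound the factors $\frac{m-k}{m}$ and $\frac{k}{k+1}$ by $1$ to obtain $\frac{lk}{m}\le p+1$ and hence $O(pn)$. If anything, your write-up is slightly more explicit than the paper's in citing Theorem~\ref{lp1m} for the substitution $l=P_1(k,p)(m-k)$.
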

\begin{proof}
According to Theorems \ref{thrm:partitiontotalsize} and \ref{thrm:totalsize},
$\frac{lk}{m}n+n < \frac{(m-k)n}{m}(p+1) + n < (p+1)n+n= O(pn)$.
\end{proof}

Since $p<<k$, the total partition size $O (pn)$ is far smaller than $\Theta (kn)$ in traditional partition methods.  In practice, $p$ is fixed as a small constant; thus the size becomes $\Theta (n)$.  In the following discussion, we present a stronger bound for the total partition size without this assumption.

\begin{theorem}
\label{thrm:pbound}
In a random string model, for any integer $a>0$, $P_1(k+a,p+a) \leq 2\cdot P_1(k,p)+\frac{p+2}{4^p}$.
\end{theorem}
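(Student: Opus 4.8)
The goal is to bound $P_1(k+a, p+a)$ in terms of $P_1(k,p)$. Recall from the proof of Theorem~\ref{thrm:totalsize} that $P_1(k,p)$ is the probability that, in a random string $s_0 s_1 \ldots s_k$ of length $k+1$, the first $p$-substring $s_0 \ldots s_{p-1}$ or the last $p$-substring $s_{k-p+1} \ldots s_k$ is the unique minimum among all $p$-substrings. My plan is to relate the event defining $P_1(k+a, p+a)$ — which concerns a string of length $k+a+1$ and $(p+a)$-substrings — to the corresponding event for parameters $(k,p)$ by stripping off the extra $a$ characters, while carefully accounting for the coupling between the windows. Let me think of a string $t = t_0 t_1 \ldots t_{k+a}$ of length $k+a+1$ and ask when its first or last $(p+a)$-substring is the strict minimum.

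Consider the map from $(p+a)$-substrings of $t$ to $p$-substrings obtained by deleting, say, the last $a$ characters of each window; this induces a correspondence between the length-$(k+a+1)$ string $t$ and the length-$(k+1)$ prefix-type string $s = t_0 \ldots t_k$. The key observation I would try to exploit is: if the first or last $(p+a)$-substring of $t$ is the unique minimum, then \emph{either} the truncated first/last window is already the unique minimum $p$-substring of the corresponding shorter string (contributing a term comparable to $P_1(k,p)$ — and the factor $2$ should absorb the two ``ends'' and the two ways the extra characters could be attached), \emph{or} we are in a degenerate situation where several $p$-substrings tie for the minimum and the tie is only broken by the trailing characters. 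The probability of such a tie — two disjoint or near-disjoint $p$-windows within the string agreeing on all $p$ symbols — is at most something like $\frac{p+2}{4^p}$, since forcing two length-$p$ windows to coincide costs a factor $4^{-p}$, and there are $O(p)$ relevant positions where a short window could be shifted to create the needed collision. Assembling these two cases by a union bound yields $P_1(k+a,p+a) \le 2\,P_1(k,p) + \frac{p+2}{4^p}$.

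\textbf{Carrying it out.} First I would fix notation for the ring construction from Theorem~\ref{thrm:totalsize}, now with $k+1$ replaced by $k+a+1$, so that $P_1(k+a,p+a) = |S_{k+a,p+a}|/4^{k+a+1}$. Second, I would define the ``truncation'' event: on the ring of length $k+a+1$, delete the last $a$ positions of each rotation $R_j$ to get length-$(k+1)$ strings and their $p$-substrings, and argue that a rotation whose $(p+a)$-minimum sits strictly interior cannot be in $S_{k+a,p+a}$ — exactly as before, there are $k-p$ such rotations, leaving $p+a+1$ candidates, but I want a bound in terms of $P_1(k,p)$ not just $\frac{p+a+1}{k+a+1}$, so I instead condition on whether the minimum $p$-substring (after truncation) is \emph{unique}. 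Third, in the unique case, I would show the event ``$(p+a)$-first/last is the strict min of $t$'' is contained in the event ``$p$-first/last is the strict min of $s$'' up to a bounded-multiplicity accounting (the trailing $a$ symbols are free, so probabilities match after marginalizing), giving the $2\,P_1(k,p)$ term. Fourth, in the non-unique case, I would count the strings where two $p$-substrings coincide: there are at most $p+2$ choices of offset pairs that matter once one window is pinned near an end, each contributing $4^{-p}$ after the forced agreement, yielding $\frac{p+2}{4^p}$.

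\textbf{Main obstacle.} The delicate part is the unique case and the precise source of the factor $2$: one must show cleanly that truncating the last $a$ characters of the extremal $(p+a)$-windows does not \emph{shrink} the set of comparisons in a way that breaks the reduction — i.e., that strict minimality of a long window at an end implies strict (or at least weak, then handled by the tie term) minimality of the truncated short window at that end. There is a subtlety because a $(p+a)$-window being minimal does not immediately force its length-$p$ prefix to be minimal among \emph{all} length-$p$ windows of the shorter string; one has to split on where the disagreement first occurs, and it is this case analysis — rather than any hard computation — that absorbs the constant $2$ and spills the leftover $\frac{p+2}{4^p}$ correction. I expect the bookkeeping of exactly which offsets contribute to the collision term, and confirming the coefficient is $p+2$ and not $p+1$ or $p+3$, to be the fussiest step.
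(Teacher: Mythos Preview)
Your truncation step is exactly the paper's first move: if the first or last $(p+a)$-substring of $t=t_0\ldots t_{k+a}$ is the unique minimum, then the first or last $p$-substring of $s=t_0\ldots t_k$ is \emph{a} minimum (possibly tied), since lexicographic order on $(p+a)$-strings refines the order on their length-$p$ prefixes. In the paper's notation this is $P_1(k+a,p+a)\le P_2(k,p)$, where $P_2$ allows the endpoint $p$-substring to be a non-unique minimum.

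The gap is in how you distribute the factor $2$ and the $\frac{p+2}{4^p}$ between your two cases. Your ``unique'' case contributes only $P_1(k,p)$, not $2P_1(k,p)$: the trailing $a$ symbols are indeed free, but $P_1$ already counts both ends, so there is no multiplicity to harvest there. That forces your ``tied'' case to carry the remaining $P_1(k,p)+\frac{p+2}{4^p}$, yet you propose to bound it by $\frac{p+2}{4^p}$ alone via a union over collision offsets. This cannot work: there are $k-p+1$ positions that can tie with an endpoint, not $O(p)$ of them (e.g.\ the first and last $p$-substrings can coincide, with the interior arbitrary, already at cost $4^{-p}$), so a direct union bound gives only $O(k/4^p)$, which is useless for a $k$-independent estimate.

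What you are missing is that the paper never bounds the tie probability $P_2-P_1$ by something small; it bounds it by $P_1$ itself plus a small error, via an injection $S^*_{k,p}\setminus S_{k,p}\to S_{k,p}$. If the first $p$-substring of $s$ is a tied minimum, decrement the first nonzero symbol of $s$; this strictly lowers the first $p$-substring below every other $p$-substring, landing in $S_{k,p}$, and the map is injective. The only failures occur when the first $p$-substring is $0^p$ or has a single $1$ among zeros (and symmetrically when the last $p$-substring is $0^p$), and these exceptional patterns together contribute at most $\frac{p+2}{4^p}$. Hence $P_2-P_1\le P_1+\frac{p+2}{4^p}$ and the theorem follows. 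Both the factor $2$ and the additive $\frac{p+2}{4^p}$ come out of this injection on the tied case; that mechanism is what your outline lacks.
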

\begin{proof}
Let $S_{k,p}=\{s||s|=k+1$, $s$' first or last p-substring is the only minimum p-substring in $s\}$, $S_{k,p}^*=\{s||s|=k+1$, $s$' first or last p-substring is one of the minimum p-substrings in $s\}$.  We have $P_1(k,p)=|S_{k,p}|/4^{k+1}$. Let $P_2(k,p)=|S_{k,p}^*|/4^{k+1}$.  Given a $k+a+1$ length string $t$ that belongs to $S_{k+a,p+a}$, consider the $k+1$ length string consisting of the first $k+1$ characters of $t$. Obviously it belongs to $S^*_{k,p}$, so we have $P_1(k+a,p+a)\leq P_2(k,p)$. Hence, we only need to prove $P_2(k,p)-P_1(k,p)\leq P_1(k,p)+\frac{p+2}{4^p}$.

Given a string $s=s_1s_2 \ldots s_{k+1}$ which belongs to set $S_{k,p}^*-S_{k,p}$, we build an injective mapping from $S_{k,p}^*-S_{k,p}$ to $S_{k,p}$. For the situation where the first p-substring of $s$ is one of the minimum p-substrings, let $s_r$ be the first character that is not $0$. Then we map $s=s_1s_2 \ldots s_{k+1}$ to $s'=s_1,\ldots, s_{r-1}, s_{r}-1, s_{r+1}, \ldots, s_{k+1}$. It is easy to see that $s'$ belongs to $S_{k,p}$, except two situations: p-substring $s_1s_2 \ldots s_p$ is (1) $00\ldots 0$ or (2) has only one $1$ while all other characters are $0$. These two situations have a probability of $\frac{p+1}{4^p}$ (detailed proof omitted due to space limit).  Similarly, for the other situation where the last p-substring of $s$ is one of the minimum p-substrings, we map $s=s_1s_2 \ldots s_{k+1}$ to $s''=s_1,\ldots, s_{r-1}, s_{r}-1, s_{r+1}, \ldots, s_{k+1}$, where $s_r$ is the last character that is not $0$. Then $s''$ belongs to $S_{k,p}$, except for the case that p-substring $s_{k-p+2}s_{k-p+3} \ldots s_{k+1}$ is $00\ldots 0$, whose probability is $\frac{1}{4^p}$. Hence, $|S^*-S|\leq |S|+\frac{p+2}{4^p}\cdot4^{k+1}$.  That is, $P_2(k,p)\leq 2\cdot P_1(k,p)+\frac{p+2}{4^p}$.
\end{proof}

Assuming $k=m/2$, $k<100$, $p<k/5$, we have
\begin{eqnarray*}
kl=k\cdot l(m, k, p)=k\cdot P_1(k,p)\cdot (m-k)\,\,\,\,&(Theorem\,\ref{lp1m})\\
<(2k\cdot P_1(k-p+5,5)+k\cdot\frac{7}{4^5})\cdot (m-k)&(Theorem\,\ref{thrm:pbound})\\
<(2\cdot\frac{k}{k-p+6}\cdot6+0.7)\cdot m/2\qquad\qquad\,\,\,&(Theorem\,\ref{thrm:totalsize})\\
<(12\cdot\frac{100}{86}+0.7)\cdot m/2<7.4m.\qquad\qquad
\end{eqnarray*}

Therefore, $\frac{kl}{m}n + n < 8.4n$, which is much better than $\Theta (kn)$.

\subsection{Largest Partition Capacity}  Since MSP has to load/hash each partition into main memory, the largest\emph{ partition capacity}, defined as the maximum number of distinct k-mers contained by a partition, determines the peak memory. We study its upper bound and lower bound in a random string model.

\begin{theorem}
\label{thrm:peakmemory}
In a random string model, the maximum percentage of distinct k-mers covered by one p-substring is bounded by $\frac{3k}{4^{p+1}}$, when $p\geq 2$.
\end{theorem}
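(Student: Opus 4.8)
The plan is to recast the claim as a probability statement. Under the random string model a $k$-mer $\kappa$ is uniform over the length-$k$ strings of the four-symbol alphabet, so the fraction of distinct $k$-mers covered by a fixed length-$p$ string $r$ equals $\Pr[\,min_p(\kappa)=r\,]$; it therefore suffices to prove $\Pr[\,min_p(\kappa)=r\,]\le \tfrac{3k}{4^{p+1}}$ for \emph{every} $r$ (the extremal $r$ is morally the all-zero string $00\cdots0$, but the argument below never needs that reduction). Write $W_1,\dots,W_{k-p+1}$ for the consecutive length-$p$ windows of $\kappa$, so that $min_p(\kappa)=\min_i W_i$. I would decompose the event $\{min_p(\kappa)=r\}$ according to the smallest index $i^{*}$ with $W_{i^{*}}=r$; these sub-events are pairwise disjoint and their union is $\{min_p(\kappa)=r\}$.

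The term $i^{*}=1$ contributes at most $\Pr[W_1=r]=4^{-p}$. For $i^{*}=i\ge 2$ the crucial step is a short deterministic lemma: the symbol $s_{i-1}$ immediately preceding the first minimizing window cannot be $0$. Indeed, $W_{i-1}$ is a window, so $W_{i-1}\ge r$; it is $\ne r$ by minimality of $i$; hence $W_{i-1}>r$. But $W_{i-1}=(s_{i-1},r_1,\dots,r_{p-1})$ while $r=(r_1,r_2,\dots,r_p)$, so if $s_{i-1}=0$ a one-line lexicographic comparison forces $W_{i-1}\le r$: when $r_1>0$ the leading coordinates already give $0<r_1$; when $r_1=0$ one compares $(r_1,\dots,r_{p-1})$ with $(r_2,\dots,r_p)$, and at their first disagreeing coordinate the left side carries a $0$ against a strictly positive symbol, while if they never disagree then $r=00\cdots0$ and $W_{i-1}=00\cdots0=r$, contradicting strictness. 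Since $s_{i-1}$ is independent of the coordinates forming $W_i$, this gives $\Pr[i^{*}=i]\le \Pr[W_i=r,\ s_{i-1}\neq 0]=\tfrac34\cdot 4^{-p}$ for each of the $k-p$ indices $i\in\{2,\dots,k-p+1\}$.

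Summing over all window positions yields $\Pr[\,min_p(\kappa)=r\,]\le 4^{-p}+(k-p)\cdot\tfrac34\cdot 4^{-p}=\frac{3k-3p+4}{4^{p+1}}$, and $3k-3p+4\le 3k$ holds precisely when $p\ge 2$ (for integer $p$), which is the hypothesis; taking the maximum over $r$ completes the proof. I expect the main obstacle to be exactly the deterministic lemma that $s_{i-1}\neq0$: it is immediate when $r=00\cdots0$, but for a general minimum $p$-substring that begins with $0$ one must run the shifted-prefix comparison carefully and, in particular, keep the inequality $W_{i-1}>r$ strict so that the degenerate all-zero case is correctly ruled out. Everything after that is merely a union bound over the $k-p+1$ window positions together with the elementary estimate $\tfrac{4+3(k-p)}{4^{p+1}}\le\tfrac{3k}{4^{p+1}}$.
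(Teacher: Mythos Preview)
Your argument is correct and arrives at exactly the same numerical bound as the paper, $4^{-p}+(k-p)\cdot\tfrac34\cdot 4^{-p}$, via the same key observation: the symbol immediately preceding the first occurrence of the minimum $p$-substring cannot be $0$. The execution, however, differs in two useful ways.

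First, the paper reduces to the single extremal pivot $r=00\cdots0$ (arguing informally that the lexicographically smallest $p$-string yields the largest partition) and then bounds only $\alpha(k,p)=\Pr[00\cdots0\text{ occurs in }\kappa]$. You instead prove $\Pr[\min_p(\kappa)=r]\le \tfrac{3k}{4^{p+1}}$ for \emph{every} $r$ directly, so you never need the extremality claim; your shifted-prefix lemma (that $s_{i-1}=0$ forces $W_{i-1}\le r$) is the genuine generalization of the paper's ``character before the first $00\cdots0$ is nonzero'' step. Second, the paper packages the count as a recursion in $k$ (splitting on whether the last window is the only occurrence), whereas you do a first-occurrence decomposition and a union bound over window positions. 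Both routes give the identical estimate, but yours is slightly cleaner in that it avoids the recursion and the unproved maximality assertion; the paper's recursion, on the other hand, is what lets them also derive the matching lower bound $\alpha(k,p)>\tfrac{2k}{4^{p+1}}$, which a pure union bound cannot give.
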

\begin{proof}
In a random string model, each symbol has equal opportunity to appear in each position of short reads.  The probability of observing any length-$m$ string is equal.  As the smallest p-substring defined by lexicographical order, the partition built on the p-substring $00\ldots 0$ has the largest number of distinct k-mers.

Let $\alpha(k,p)$ denote the percentage of distinct k-mers covered by p-substring $00\ldots 0$.  This percentage is not related to $m$. For fixed $p$, there are two situations for a k-mer to have a p-substring $00\ldots 0$: (1) the first $k-1$ characters have a p-substring $00\ldots 0$, or (2) the last p-substring is the only $00\ldots 0$ in this k-mer. Note that in the later situation the first $k-p-1$ characters must not have a p-substring $00\ldots 0$ and the $(k-p)^{th}$ character must not be $0$.  This gives us
\begin{equation}
\alpha(k,p)=\alpha(k-1,p)+(1-\alpha(k-p-1,p))\frac{3}{4}\cdot\frac{1}{4^p} \nonumber \end{equation}
Obviously, $\alpha(p,p)=\frac{1}{4^p}$ and $\alpha(k,p)\geq \alpha(k-1,p)$. Thus
\begin{eqnarray*}\label{equ:31}
\alpha(k,p)=\alpha(k-1,p)+(1-\alpha(k-p-1,p))\frac{3}{4}\cdot\frac{1}{4^p}\,\,\,\\
<\alpha(k-1,p)+\frac{3}{4^{p+1}}<\frac{1}{4^p}+(k-p)\cdot \frac{3}{4^{p+1}}\\
<\frac{3k}{4^{p+1}}, \mathrm{when}\, p\geq 2. \qquad\qquad\qquad\qquad\quad\quad\,
\end{eqnarray*}
\end{proof}

We can further establish a lower bound,
\begin{eqnarray*}
\alpha(k,p)=\alpha(k-1,p)+(1-\alpha(k-p-1,p))\frac{3}{4}\cdot\frac{1}{4^p}\\
>\alpha(k-1,p)+(1-\alpha(k,p))\cdot \frac{3}{4^{p+1}}\qquad\quad\,\\
>\frac{1}{4^p}+(k-p)\cdot (1-\alpha(k,p))\cdot \frac{3}{4^{p+1}}.\qquad\,\,\,
\end{eqnarray*}
From above, if $4<p<k/5$, we have
\begin{eqnarray*}
\frac{2k}{4^{p+1}}<\alpha(k,p)<\frac{3k}{4^{p+1}}.
\end{eqnarray*}

\begin{figure}[h]
\centering
\epsfig{file=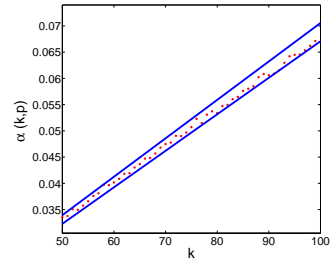, width=0.25\textwidth}
\vspace{-2mm}
\caption{The Bounds of $\alpha$(k,p)}
\vspace{-2mm}
\label{fig:maxperc}
\end{figure}

Figure \ref{fig:maxperc} depicts the bounds for the expected percentage of k-mers covered by the largest partition (corresponding to a p-substring $00\ldots 0$) with respect to different $k$ values. $p$ is set at $5$.  The result shows that the bounds we have proved are good: When $k$ changes from $50$ to $100$, the maximum percentage of distinct k-mers covered by one minimum p-substring (the largest partition) is quite close to the lower and upper bounds we provided.

To calculate the entire distribution of partition capacities (the number of distinct k-mers covered by each p-substring) in a random string model, we develop an efficient quadratic-time algorithm ($O(m^2)$, see the Appendix). Using this algorithm, we do not need to use costly simulation to estimate the partition capacity.

\begin{figure}[h]
\centering
\epsfig{file=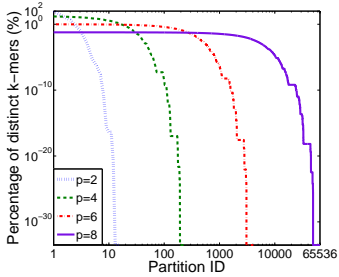, width=0.25\textwidth}
\vspace{-2mm}
\caption{Expected Partition Capacity Distribution}
\vspace{-2mm}
\label{fig:estimation-distribution}
\end{figure}

Figure \ref{fig:estimation-distribution} shows the expected distribution of partition capacities with respect to different minimum substring lengths, assuming that 4 bases A, C, G, T appear with equal probability and k-mer length is 59. Here the p-substrings are sorted according to the percentage of $k$-mers they cover. The figure uses logarithm on both axes. The result shows a property: when $p$ increases, there is a plateau where many $p$-substrings cover a similar percentage of distinct k-mers. We can conclude that there is no extremely memory consuming partition when $p$ is not very small. Furthermore, the peak memory of MSP can be fully controlled by $p$.

\section {Reverse Complements}
\label{sec:reverse}
DNA sequences can be read in two directions: forwards and backwards with each symbol changed to its Watson-Crick complements ($A \leftrightarrow T$  and $C \leftrightarrow G$). They are called \emph{reverse complement} and considered equivalent in bioinformatics.  Most sequencing techniques extract short reads in either direction. In an assembly processing, each sequence should be read twice, once in the forward direction and then in the reverse complement direction.

Reverse complement is not an issue for bucket partitioning: when a k-mer is read into memory, a reverse complement can be built online.  It becomes tricky for minimum substring partitioning since MSP intends to compress consecutive k-mers together if they share the same minimum $p$-substring.  Unfortunately, their reverse complements might not share the same minimum $p$-substring.  This forces us to generate the reverse complement explicitly for each short read, which will double the I/O cost.
\begin{definition}

[Minimum Substring with Reverse Complements]
\label{def:reverse}
Given a string $s$, a length-p substring $t$ of $s$ is called the minimum p-substring of $s$, if $\forall s'$, $s'$ is a length-p substring of $s$ or $s$' reverse complement, s.t., $t\leq s'$ ($\leq $ defined by lexicographical order).
\end{definition}

Definition \ref{def:reverse} redefines minimum substring by considering the reverse complement of each k-mer.  With this new definition, we need not output reverse complements explicitly, nor change the minimum substring partitioning process.  In the following discussion, if not mentioned explicitly, we will ignore this problem.

\section{Algorithms}
\label{sec:algorithms}
In this section, we describe the detailed algorithm to build a de Bruijn graph. It consists of three steps: Partitioning, Mapping and Merging.  Each step is performed by a program that takes an on-disk representation of input and produces a new on-disk representation of output.  The input of the first step is the raw short read sequences and the output of the last step is a sequence of id's mapped to the k-mers in short read sequences, in the same order; the duplicate k-mers shall have the same id.

\subsection{Partitioning}
The first step is to partition short reads using MSP.  A straightforward approach is as follows: (1) given a short read $s$, slide a window of width $k$ through $s$ to generate k-mers, (2) for each k-mer, calculate its minimum p-substring, (3) find super k-mers in $s$ (adjacent k-mers sharing the same minimum p-substring).  This method has to calculate the minimum p-substring of every $k$-mer.  Each $k$-mer needs $(k-p+1)$ p-substring comparisons.  Let $m$ be the length of $s$.  In total, this approach needs to perform $(k-p+1)*(m-k+1)$=$\Theta (mk)$ p-substring comparisons.

The above solution does not leverage the overlaps among adjacent k-mers.  When the $k$-size window slides through $s$, we can maintain a priority queue on $p$-substrings in the window.  Each time, when we slide the window one symbol to the right, we drop the first $p$-substring in the previous window from the queue and add the last $p$-substring of the current window into the queue.  Since the number of $p$-substrings in a window is $k-p+1$ and there are $m-p+1$ p-substrings in $s$, the number of p-substring comparisons is $O((m-p+1)\log(k-p+1))$=$O(m\log k)$.

While the priority queue is theoretically good, the overhead introduced by the queue structure could be high.  We thus introduce a simple scan algorithm, as described in Algorithm \ref{algo:partition}.  Algorithm \ref{algo:partition} first scans the window from the first symbol to find the minimum p-substring, say min\_s, and the start position of min\_s, say min\_pos.  Then it slides the window towards right, one symbol each time, till the end of the short read.  After each sliding, it tests whether min\_pos is still within the range of the window. If not, it re-scans the window to get the new min\_s and min\_pos. Otherwise, it tests whether the last p-substring of the current window is smaller than the current min\_s. If yes, this last p-substring is set as the new min\_s and its start position as the new min\_pos.  As analyzed in the previous section, adjacent k-mers likely have the same minimum p-substring.  Therefore, it needs not to re-scan the window very often.  Although the worst case time complexity is $O(mk)$ $p$-substring comparisons, Theorem \ref{theorem:efficiency} shows it could be more efficient in practice since the average number of breaks is small.

\begin{algorithm}
\caption{SimpleScan}
\label{algo:partition}
\begin{algorithmic}
\STATE Input: String $s=s_1 s_2 \ldots s_m$, integer $k, p$.
\STATE min\_s = the minimum p-substring of $s[1, k]$
\STATE min\_pos = the start position of min\_s in $s$

\FORALL {$i$ from $2$ to $m-k+1$}
    \IF {$i$ $>$ min\_pos}
        \STATE min\_s = the minimum p-substring of $s[i, i+k-1]$
        \STATE min\_pos = the start position of min\_s in $s$
    \ELSE
        \IF {the last p-substring of $s[i, i+k-1]$ $<$ min\_s}
            \STATE min\_s = the last p-substring of $s[i, i+k-1]$
            \STATE min\_pos = the start position of min\_s in $s$
        \ENDIF
    \ENDIF
\ENDFOR
\end{algorithmic}
\end{algorithm}

\begin{theorem}
\label{theorem:efficiency}
Given an $m$-length string, assume minimum substring partitioning divides $s$ into $l+1$ substrings. Algorithm \ref{algo:partition} needs at most $\Theta (m+lk)$ p-substring comparisons.
\end{theorem}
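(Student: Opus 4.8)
The plan is to account separately for the two kinds of work Algorithm~\ref{algo:partition} does as the window slides from position $1$ to $m-k+1$: the work spent re-scanning a window from scratch (the expensive branch, triggered when \texttt{min\_pos} falls out of the window), and the work spent on the cheap branch (a single $p$-substring comparison of the last $p$-substring against the current minimum). I would bound each contribution and add them.

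First I would argue that a full re-scan of a window costs $\Theta(k)$ $p$-substring comparisons: a window $s[i,i+k-1]$ contains $k-p+1 = \Theta(k)$ length-$p$ substrings, and finding the minimum among them by a linear scan is $\Theta(k)$ comparisons. Next, the key combinatorial observation: the expensive branch is entered only when the stored minimum position leaves the window, and I claim this happens at most $l+1$ times over the whole run. The reason is that each time we re-scan and fix a new \texttt{min\_s}, that minimum $p$-substring remains the window minimum for all subsequent windows until it slides off the left end — and during that entire stretch every k-mer in those windows shares the same minimum $p$-substring, so it is contained in a single super k-mer (a single block of the MSP decomposition of $s$). Hence the number of re-scans is at most the number of blocks, $l+1$. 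This gives a total of $\Theta((l+1)k) = \Theta(lk + k)$ comparisons for all re-scans.

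For the cheap branch, every one of the $m-k$ sliding steps that does \emph{not} trigger a re-scan performs exactly one $p$-substring comparison, contributing $O(m)$ comparisons in total. Adding the two bounds, the algorithm performs $\Theta(lk + k) + O(m) = \Theta(m + lk)$ $p$-substring comparisons (absorbing the lone $\Theta(k)$ term, since $k \le m$).

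**The main obstacle** is making the claim ``\,$l+1$ re-scans suffice\,'' fully rigorous, i.e.\ tying each re-scan cleanly to a distinct block of the MSP decomposition of $s$. One has to handle the bookkeeping of the very first scan (before the loop) versus scans inside the loop, and verify that once \texttt{min\_s} is set to a genuine window minimum, the update rule never replaces it prematurely and never misses it — so that a re-scan is forced \emph{only} when the current super k-mer ends, not more often. Subtleties to check include ties (multiple equal minimum $p$-substrings in a window: which \texttt{min\_pos} is kept, and whether that can cause an extra re-scan) and the boundary case $l=0$, where the whole string is one super k-mer and no re-scan ever occurs after the initial one. Once that correspondence is pinned down, the counting and the final $\Theta(m+lk)$ bound follow immediately.
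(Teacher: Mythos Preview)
Your overall decomposition matches the paper's: bound the full re-scans by $l+1$ (each costing $k-p+1$ comparisons) and the cheap single comparisons by $O(m)$, then add. But the specific justification you give for the re-scan bound contains a wrong step. You assert that after a re-scan fixes \texttt{min\_s}, it ``remains the window minimum for all subsequent windows until it slides off the left end.'' That is false: the cheap branch can replace \texttt{min\_s} with a strictly smaller $p$-substring entering on the right \emph{before} the old minimum slides off. So the stretch between two consecutive re-scans may span several super $k$-mers, not exactly one, and your inference ``hence re-scans $\le$ blocks'' does not follow from the premise you stated.

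The fix---and what the paper actually does---is to count in the other direction. The pair $(\texttt{min\_s},\texttt{min\_pos})$ changes precisely when the window's minimum $p$-substring changes, i.e., exactly at the $l$ MSP breaks. Every re-scan inside the loop is one of those $l$ changes (the remaining changes are handled by the cheap branch), so there are at most $l$ loop re-scans plus the initial scan, giving the $l+1$ bound you want. The paper then sums $(k-p+1)(l+1)$ for the scans and $\sum_t (n_t-1)=m-k-l$ for the cheap tests, obtaining $m+lk-pl-p+1=\Theta(m+lk)$. Your flagged worry about ties is legitimate: if the same minimal $p$-substring value occurs at several positions, a re-scan can be triggered with no change in \texttt{min\_s} and hence no MSP break, which can inflate the re-scan count; the paper glosses over this exactly as you do.
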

\begin{proof}
Algorithm \ref{algo:partition} shows that min\_s and min\_pos change under two conditions: (1) $i$ $>$ min\_pos, or (2) the last p-substring of $s[i, i+k-1]$ $<$ min\_s. Under the first condition, it re-scans the k-mer $s[i, i+k-1]$, which introduces $k-p+1$ p-substring comparisons. Under the second condition, it compares the last p-substring of $s[i, i+k-1]$ with the current min\_s, which involves $1$ p-substring comparison.  Since the string $s$ is broken into $l+1$ substrings, min\_s and min\_pos changes for $l$ times.  If all these $l$ changes are due to the first condition, the total number of k-mer scans is $l+1$, including the initial scan of the first k-mer. This results in $(k-p+1)*(l+1)$ p-substring comparisons.  Within each of these $l+1$ substrings, it needs $n_t-1$ p-substring comparisons to test the second condition, where $n_t$ is the number of k-mers within the substring $s[i_t, j_t]$. For all $l+1$ substrings, the total number of p-substring comparisons due to this test is $\Sigma_{t=1}^{l+1} (n_t-1) = m-k-l$. Therefore the total number of p-substring comparisons of Algorithm \ref{algo:partition} is bounded by $(k-p+1)*(l+1)+(m-k-l)=m+lk-pl-p+1$, which is $\Theta (m+lk)$.
\end{proof}

\begin{definition}[Wrapped Partitions]
Given a string set $\{s_i\}$,  a hash function $H$, the number of partitions $t$, for any k-mer $s_{i,j}$, minimum substring partition wrapping assigns $s_{i,j}$ to the $H(min_{p}(s_{i,j})) \mod t$ partition.
\end{definition}
\vspace{-0mm}
\begin{figure}[h]
\centering
\epsfig{file=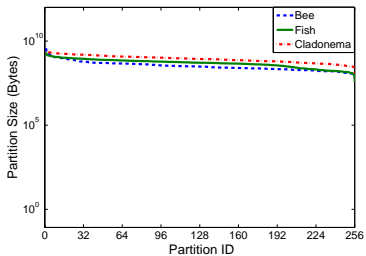, width=0.3\textwidth}
\vspace{-2mm}
\caption{Partition Wrapping}
\vspace{-0mm}
\label{fig:wrapping}
\end{figure}

Since each $p$-substring corresponds to one partition,  the total number of partitions in MSP is equal to $4^p$.  When $p$ increases, the number will increase exponentially.  To counter this effect, one can introduce a hash function to wrap the number of partitions to any user-specified partition number. In this case, each partition generated from a $p$-substring is randomly included in a wrapped partition.  The variance of partition sizes will likely decrease.  Figure \ref{fig:wrapping} shows the distribution of partition size when $p=10$ and the number of wrapped partitions is set to $256$.  The number of partitions is the same as that of $p=4$ without wrapping.  In comparison with Figure \ref{fig:distribution}, the partition size distribution is more uniform.

\subsection{Mapping}
In this step, each distinct k-mer is mapped to a unique integer id as its vertex id in the de Bruijn graph.  A straightforward solution is to process each partition one by one.  For each partition, insert k-mers into a hash table.  Whenever there is a k-mer that does not exist in the table, a new id is assigned to it.  The starting id of k-mers in one partition is the maximum k-mer id of the previous partition plus one. After one partition is processed, a disk file (called \emph{id file}) is created, the entries in hash table are written to that file. This approach works well for the mapping step. However it will cause a serious problem in the merging step.

\begin{figure}[h]
\centering
\epsfig{file=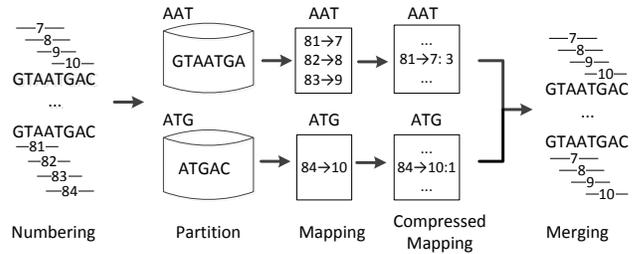, width=0.48\textwidth}
\caption{ID Replacement and Merging}
\vspace{-1mm}
\label{fig:replacement}
\end{figure}

In the merging step, we scan the short reads again to build edges for adjacent k-mers.  For each pair of adjacent k-mers, we need to locate their ids from their corresponding id files. Considering that the id files are as big as the partition files, it will cause a lot of I/O and seriously slow down the process.  In order to solve this problem, we develop an id replacement strategy, as depicted in Figure \ref{fig:replacement}.  During the partitioning step, each k-mer is assigned an integer id. The id's are assigned increasingly from $1$.  The same k-mer in different short reads receives different id's. Our goal is to replace them with the first id it receives. For each partition, we create a hash table in memory. Whenever we see a new k-mer, we first look up the hash table to see if it exists: if yes, we write a replacement record into the id replacement file, indicating that the current k-mer is a duplicate and we have to replace its pre-assigned id with the id associated with its first occurrence.

Figure \ref{fig:replacement} shows an example of the id replacement process.  Assume $k=5$, $p=3$, and $GTAATGAC$ occurs in two different short reads.  In the beginning, each k-mer in two $GTAATGAC$ is assigned a unique id, e.g, $7-10$ and $81-84$, respectively.  Sequences $GTAATGA$ is sent to the $AAT$ partition, while $ATGAC$ is sent to the $ATG$ partition.  During this process, the k-mer $81$ is mapped to the k-mer $7$, $82$ to $8$, and $83$ to $9$, while the k-mer $84$ is mapped to the k-mer $10$.  To compress the replacement file, we write the replacement records as a range instead of multiple individual records.  For the example shown in Figure \ref{fig:replacement}, one can just output a range record, $81\rightarrow 7: 3$, meaning the 3 consecutive id's starting at 81 will be replaced by 3 consecutive id's starting at 7.  Range compression is quite effective since there are many long overlaps in short reads.  According to our experiments, this kind of compression reduces the size of id replacement files to that of the original short read file.

\subsection{Merging}

After obtaining the id replacement files, the last step is merging.   In this step, we merge all the replacement files to generate a sequence of id's that map to the original short reads, in the same order.  We first open all the replacement files with each file header pointing to the first id replacement record of the corresponding file.   Since all the files are already naturally sorted in increasing order by the first entry (the pre-assigned id's to be replaced) of replacement record,  we can find the minimum id to be replaced in the current filer headers.  We write its replacing id,  move to the next replacement record, and iterate.  After this process, we get a sorted sequence of id's corresponding to k-mers in the short read dataset, in the same order.  This actually forms a disk-based de Bruijn graph.  The last step in Figure \ref{fig:replacement} shows this process, where two duplicated $GTAATGAC$ sequences receive the same id sequence.  This disk-based graph can either be distributed across multiple machines, or compressed \cite{zerbino2008velvet} and loaded into memory.

\section{Experiments}
\label{sec:experiments}
In this section, we present experimental results to illustrate the memory efficiency, effectiveness and important properties of the minimum substring partitioning method on four large real-life datasets: cladonema, bumblebee, fish, and bird. (1) We first analyze the efficiency of our graph construction algorithm in terms of memory and time cost, and compare it with two well-known open-source assembly programs, Velvet \cite{zerbino2008velvet} and SOAPdenovo \cite{li2010novo}. (2) The performance of MSP and two traditional partition/merge algorithms, \HPartition, \BPartition, are compared in terms of partition size and runtime. (3) We change different parameter settings to demonstrate important properties of MSP.  All the experiments, if not specifically mentioned, are conducted on a server with 2.40GHz Intel Xeon CPU and 512 GB RAM.

\subsection{Data Sets}

Four real-life short reads dataset are used to test our algorithms. The first one is the sequence data of Cladonema provided by our collaborators. The bee, fish and bird datasets are available via \url{http://gage.cbcb.umd.edu/data/Bombus_impatiens}, \url{http://bioshare.bioinformatics.ucdavis.edu/Data/hcbxz0i7kg/Fish}, and \url{http://bioshare.bioinformatics.ucdavis.edu/Data/hcbxz0i7kg/Parrot/BGI_illumina_data}, respectively. Table $1$ shows some basic facts.

\begin{table}[hbp]
\label{tbl:datasets}
\centering
\begin{tabular}{lcccc}
\hline
 &Cladonema &Bee    &Fish &Bird \\ \hline
Size(GB) &258.7   &93.8   &137.5 & 106.8\\
Avg Read Length(bp) &101    &124   &101 & 150 \\
\# of Reads(million) &894   &303  &598 & 323 \\
\hline
\end{tabular}
\caption{Datasets: Cladonema, Bombus impatiens(bee), Lake Malawi cichlid(fish), and Budgerigar(bird)}
\vspace{-5mm}
\end{table}

\subsection{Efficiency}
We first conduct experiments to compare MSP with two real sequence assembly programs on de Bruijn graph construction: Velvet \cite{zerbino2008velvet}, a classic de Bruijn graph based assembler, and SOAPdenovo \cite{li2010novo}, a highly optimized and leading assembler.  For all the experiments, we set the k-mer length to 59 \cite{butler2008allpaths}. For MSP, we partition the short reads into 1,000 wrapped partitions with the minimum substring length p being 12. SOAPdenovo is optimized to support multithreading, we use 2 threads here to illustrate its advantage. Both Velvet and MSP use 1 thread.  The 8-thread version of SOAPdenovo can roughly achieve the same runtime as MSP.  However, its peak memory consumption is still the same as its 2-thread version.

\begin{figure}[h]
\centering
\subfigure[\small\textit{Peak Memory}]{\label{fig:build-memory}\includegraphics[scale=0.35]{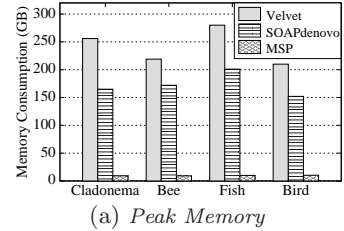}}
\subfigure[\small\textit{Running Time}]{\label{fig:build-time}\includegraphics[scale=0.35]{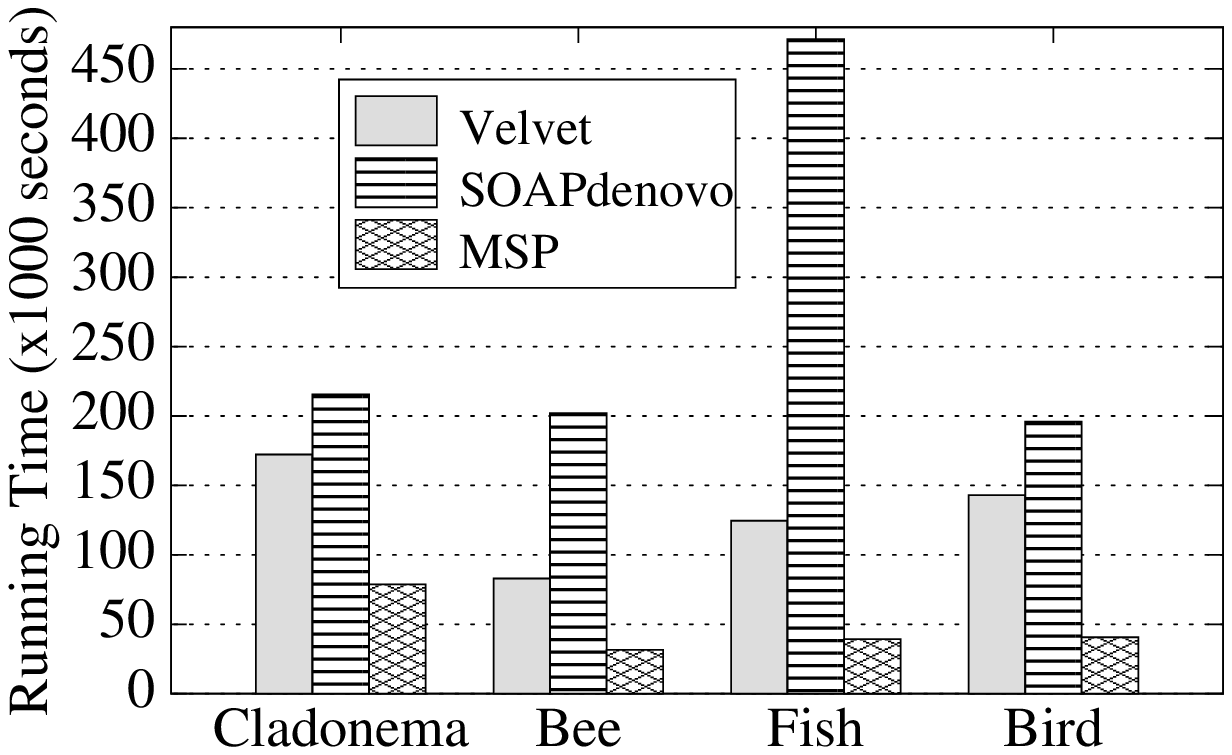}}
\caption{Velvet, SOAPdenovo, and MSP}
\label{fig:efficiency}
\end{figure}

Figure \ref{fig:efficiency} demonstrates that MSP outperforms Velvet and SOAPdenovo in terms of memory usage and running time.  For large datasets, Velvet and SOAPdenovo easily consume more than 150G memory, while our method can complete the task with less than 10G memory, an order of magnitude reduction of memory usage\footnote{If the entire de Bruijn graph needs to be loaded in main memory, we have routines available that consume 20-30\% of the memory that SOAPdenovo needs.}.

\subsection{Effectiveness}
We then conduct experiments to compare MSP with other partition/merge algorithms, \HPartition\ and \BPartition.  For all the three methods, we set the k-mer length to 59 and partition short reads into 1,000 partitions.  For MSP, we set the minimum substring length p at 12.  For \BPartition, we use the last 4 symbols to partition k-mers.  All the three algorithms use the similar amount of memory (around 10 GB). Figures \ref{fig:total-size} and \ref{fig:runtime} show the maximum disk space usage and the total running time.

\begin{figure}[h]
\centering
\subfigure[\small\textit{Maximum Disk Space Usage}]{\label{fig:total-size}\includegraphics[scale=0.35]{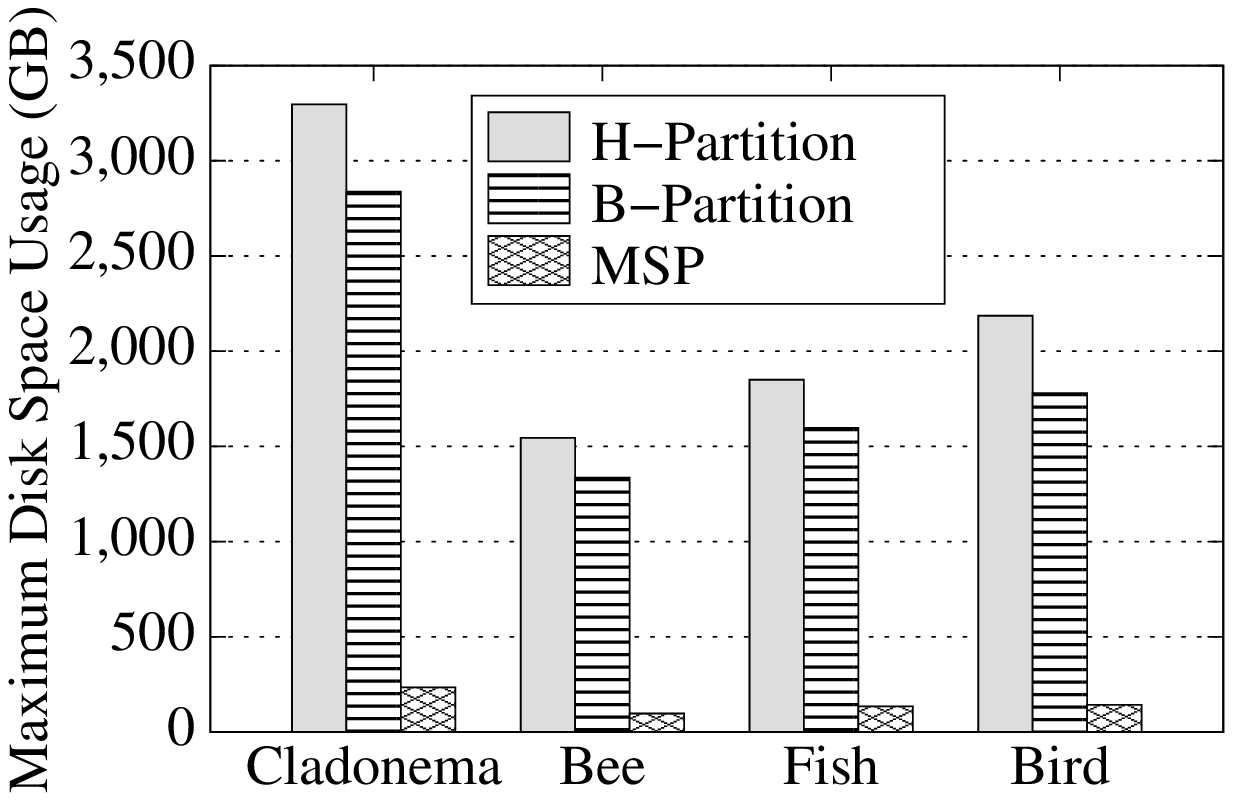}}
\subfigure[\small\textit{Running  Time}]{\label{fig:runtime}\includegraphics[scale=0.35]{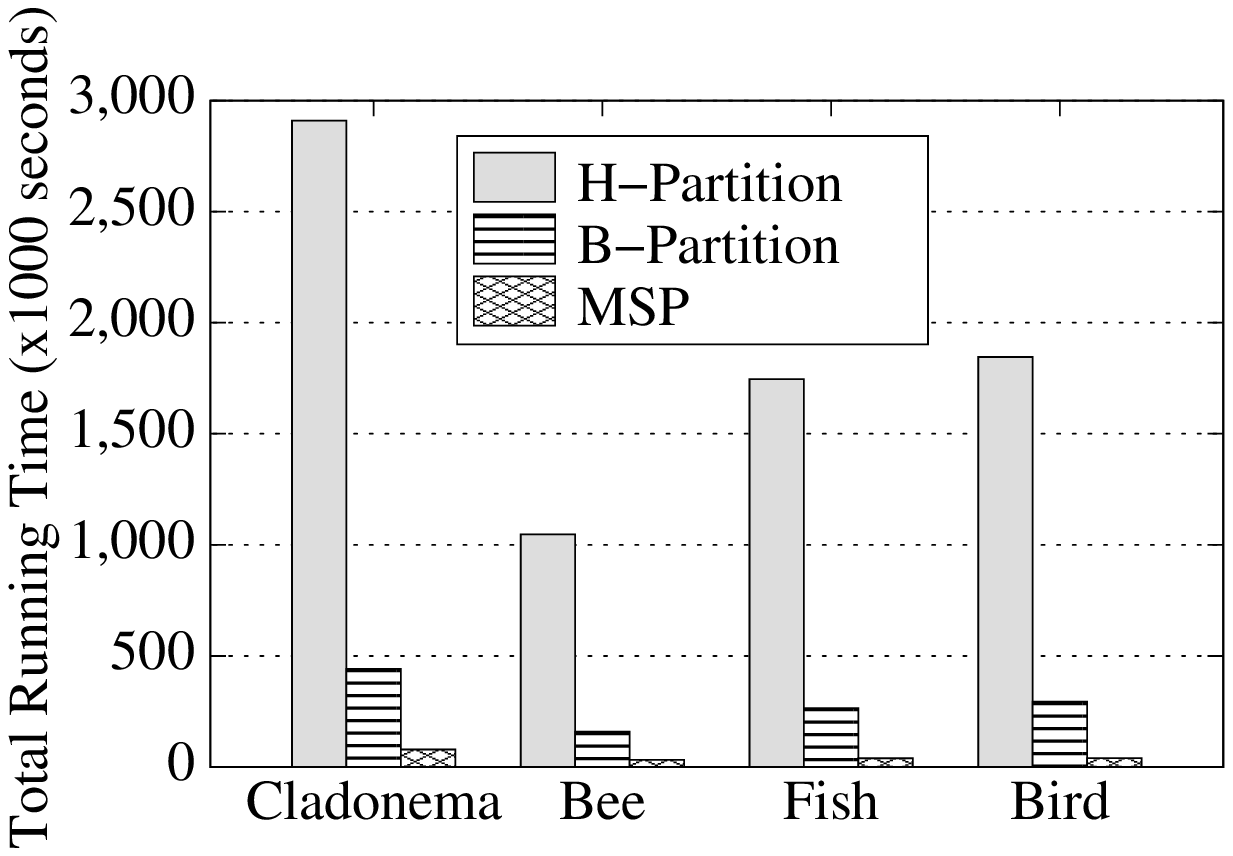}}
\caption{ \HPartition, \BPartition, MSP}
\label{fig:performance}
\end{figure}

Figure \ref{fig:performance} shows MSP outperforms the two baseline methods: compared with \BPartition, MSP can reduce the maximum disk space usage by 10-15 times and reduce the total execution time by 8-10 times. \BPartition\ was adopted by out-of-core algorithms such as \cite{Kund+10}. It implies that MSP is better than the classic approach that does not leverage the overlaps among data records.  \HPartition's overall performance is the worst since it needs multiple disk scans and sorts.

\begin{figure}[H]
\centering
\epsfig{file=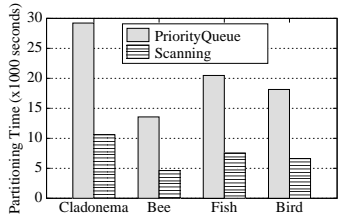, width=0.25\textwidth}
\vspace{-5mm}
\caption{SimpleScan vs. Priority Queue}
\label{fig:scan-method}
\end{figure}

We then illustrate the advantage of using a scanning method (Algorithm \ref{algo:partition}) over a priority queue approach in the partitioning step. Here we set $k$ at 59, $p$ at 12 and partition the short reads into 1,000 wrapped partitions. Figure \ref{fig:scan-method} shows that the simple scanning method in Algorithm \ref{algo:partition} is around 2 times faster than the priority queue approach.  Similar results were observed for other settings of $p$.

\subsection{Scalability}
We then conduct experiments to test the scalability of MSP. We vary the data size by randomly sampling the Clado-nema dataset. For MSP, we partition the short reads into 1,000 wrapped partitions with $p$ set at 10.

\begin{figure}[h]
\centering
\epsfig{file=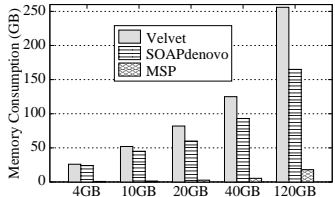, width=0.25\textwidth}
\vspace{-2mm}
\caption{Scalability: Peak Memory}
\label{fig:build-memory}
\end{figure}

\begin{figure}[h]
\centering
\epsfig{file=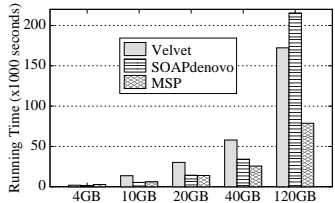, width=0.25\textwidth}
\caption{Scalability: Running Time}
\vspace{-2mm}
\label{fig:build-time}
\end{figure}

Figures \ref{fig:build-memory} and \ref{fig:build-time} show that all the three algorithms scale linearly in terms of peak memory consumption and running time. MSP performs the best.

\begin{figure*}[t!]
\centering
\subfigure[\small\textit{Peak Memory}]{\label{fig:peak-memory-p}\includegraphics[scale=0.18]{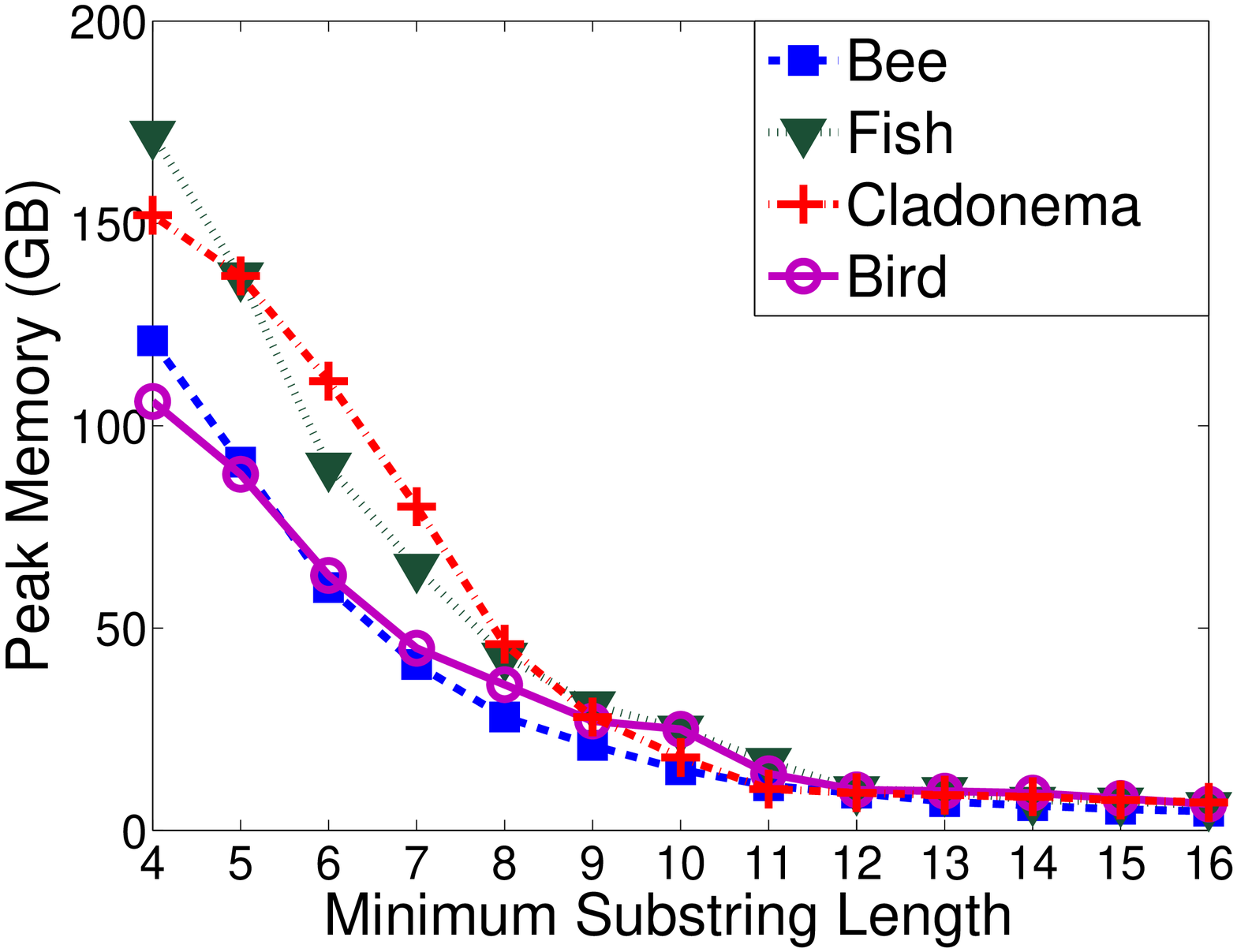}}
\subfigure[\small\textit{Total Partition Size }]{\label{fig:partition-length-p}\includegraphics[scale=0.18]{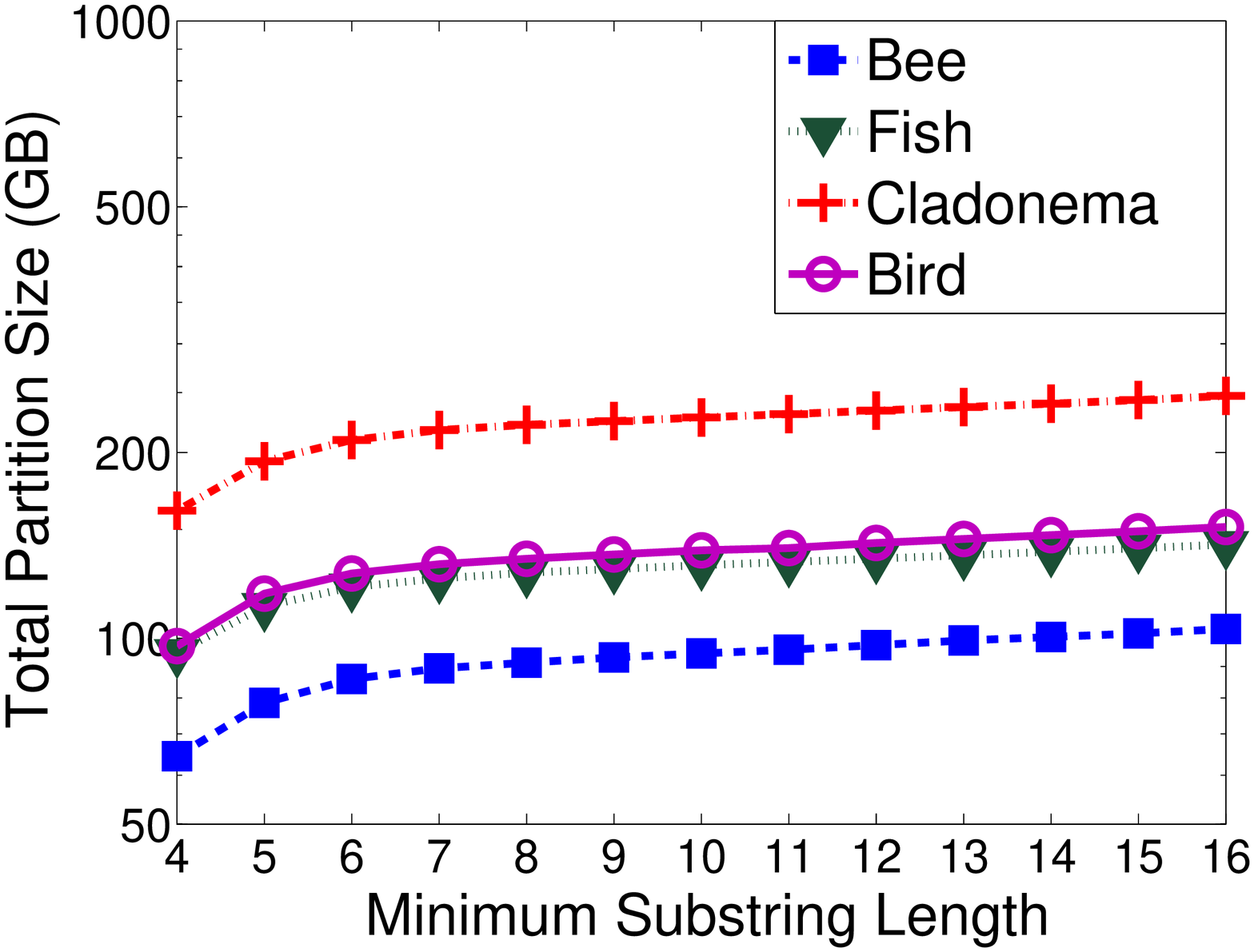}}
\subfigure[\small\textit{Running Time}]{\label{fig:runtime-length-p}\includegraphics[scale=0.18]{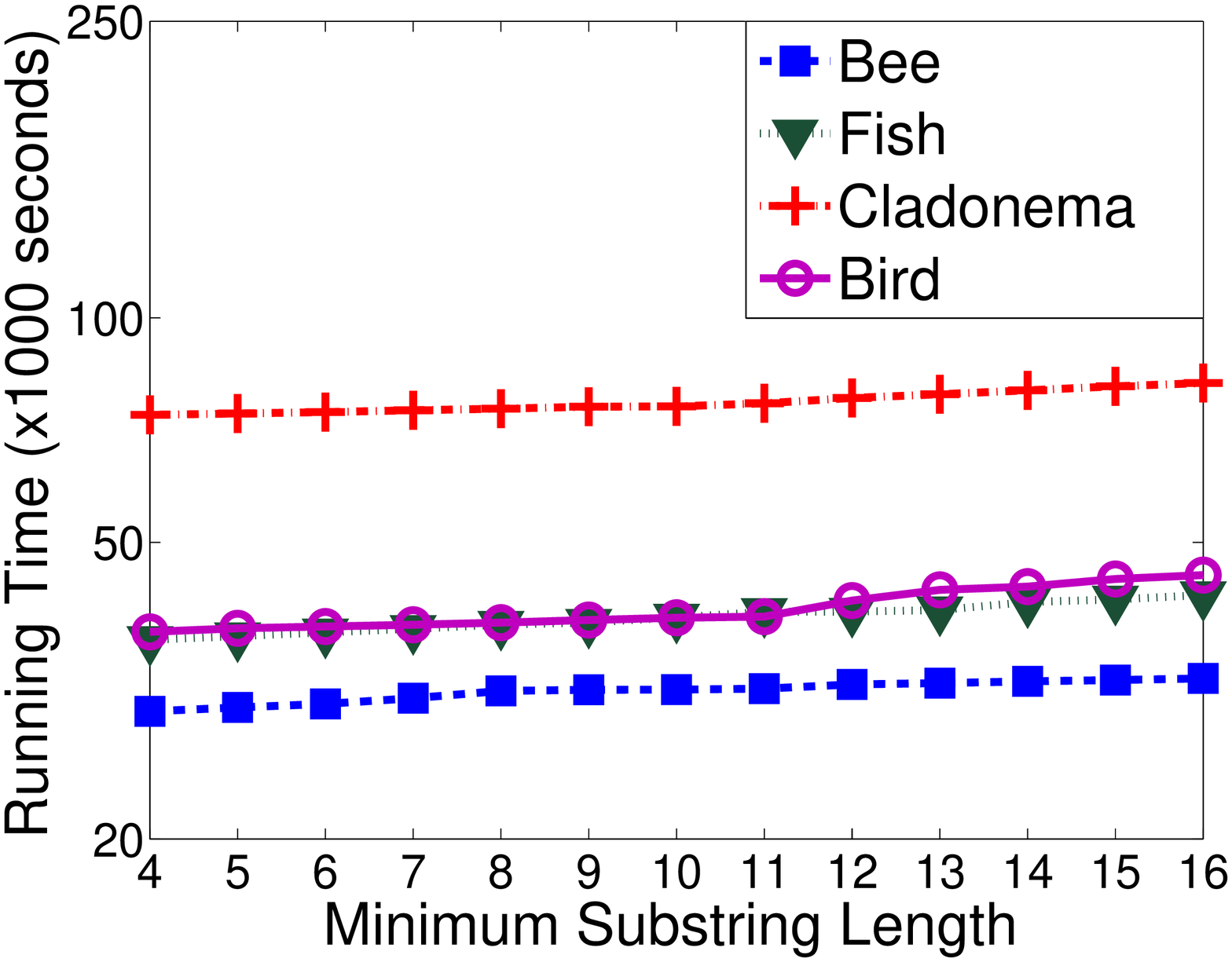}}
\caption{Varying Minimum Substring Length $p$}
\label{fig:plength}
\end{figure*}

\begin{figure*}[t!]
\centering
\subfigure[\small\textit{Peak Memory}]{\label{fig:peak-memory-k}\includegraphics[scale=0.18]{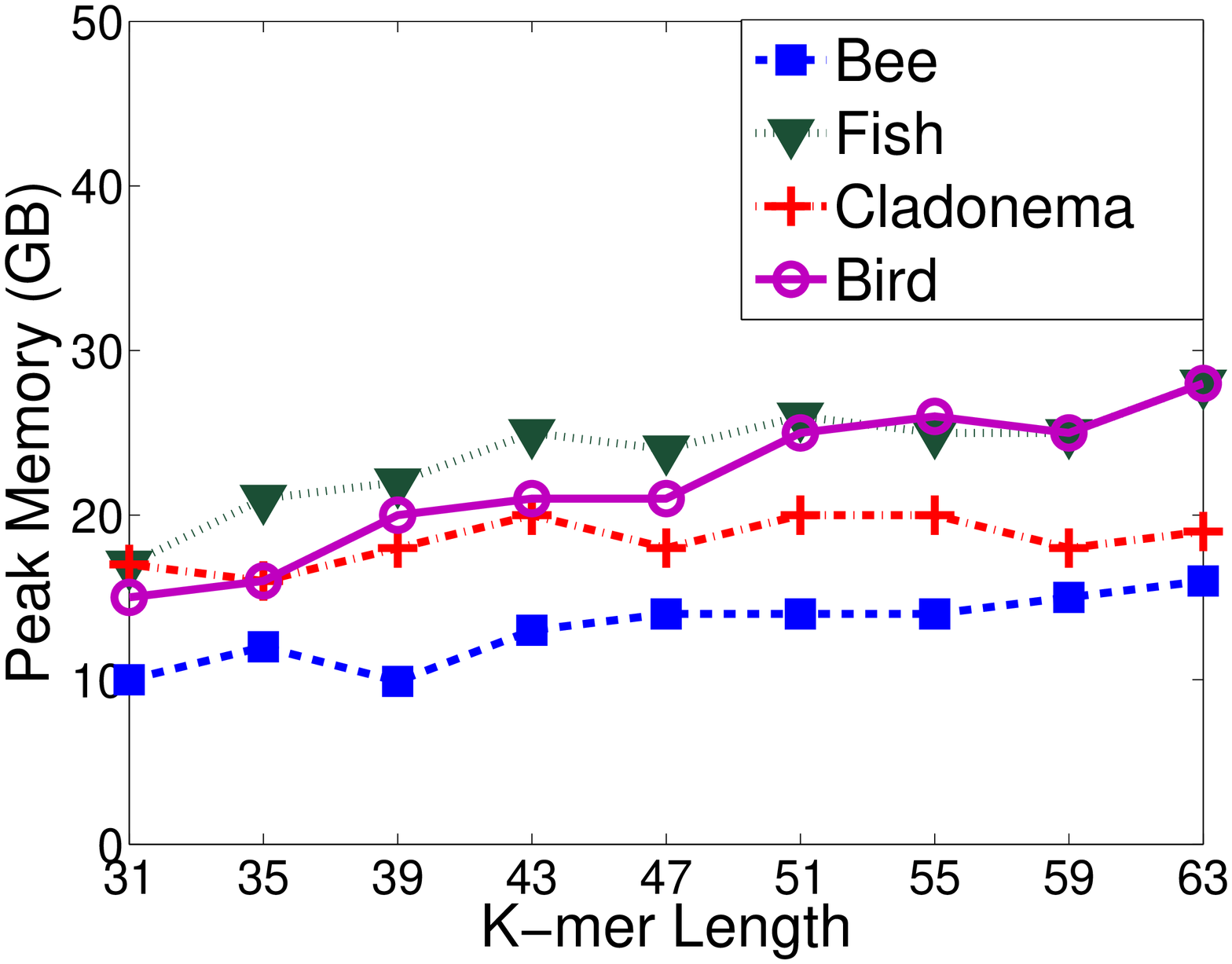}}
\subfigure[\small\textit{Total Partition Size }]{\label{fig:partition-length-k}\includegraphics[scale=0.18]{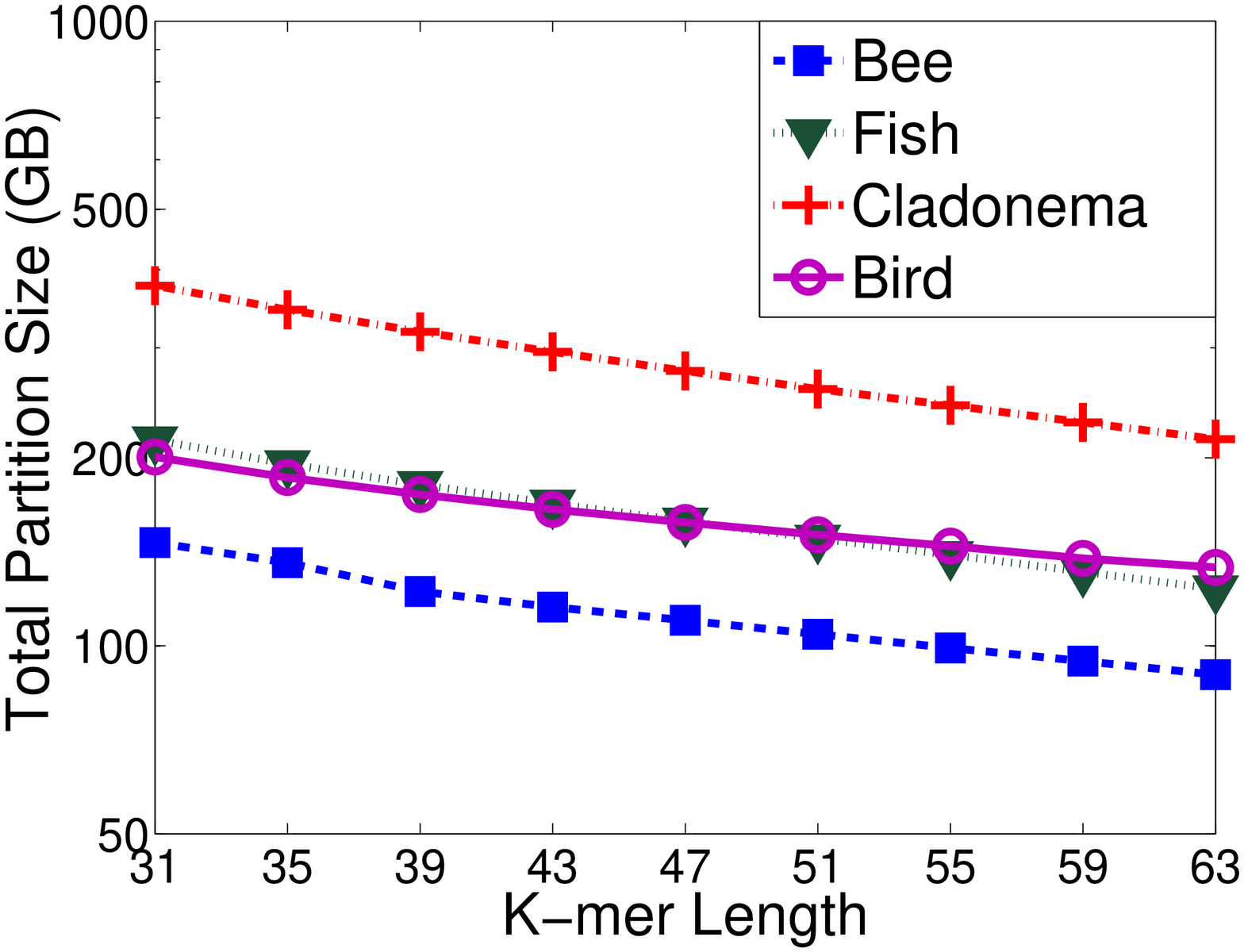}}
\subfigure[\small\textit{Running Time}]{\label{fig:runtime-length-k}\includegraphics[scale=0.18]{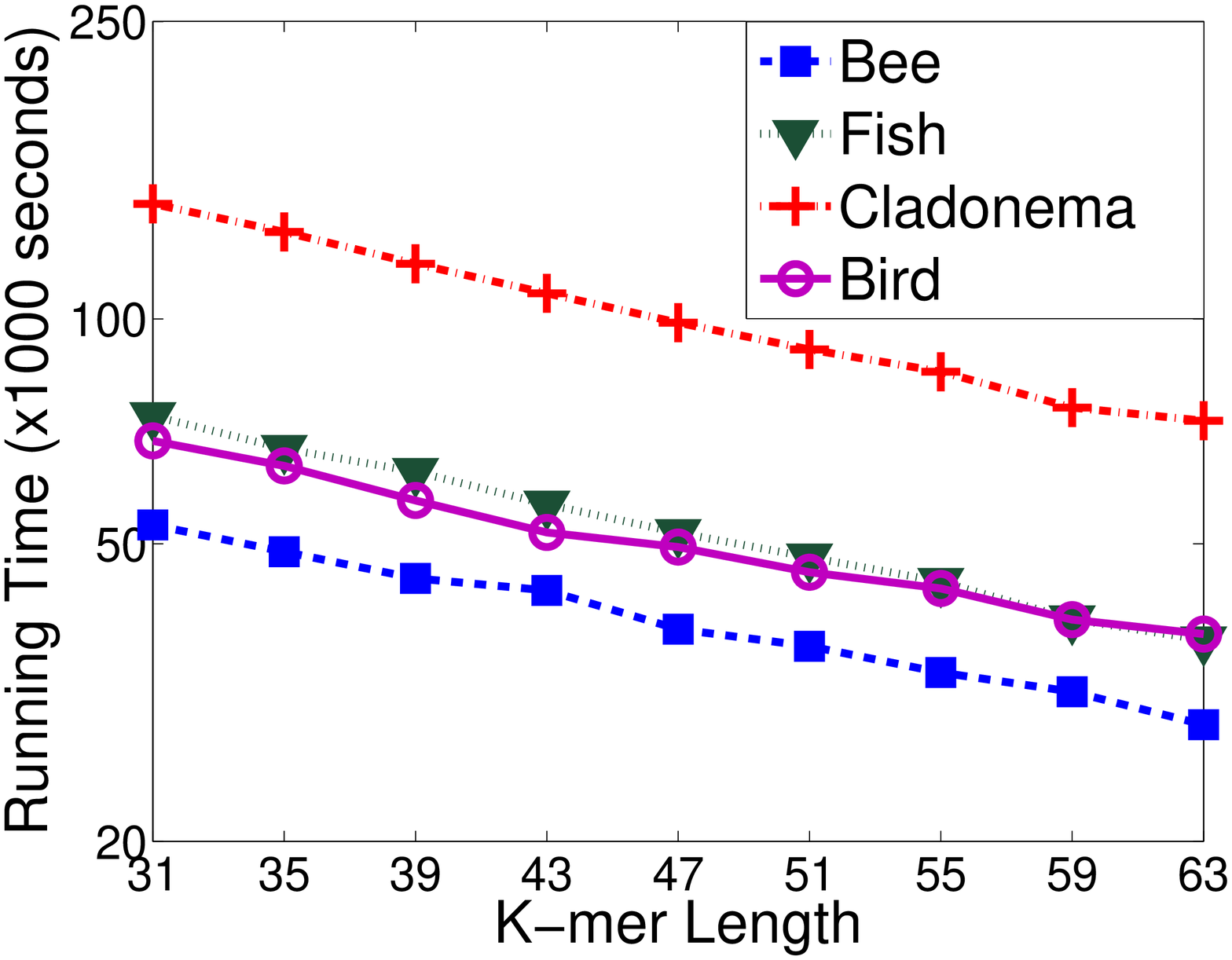}}
\caption{Varying K-mer Length $k$}
\label{fig:klength}
\end{figure*}

\subsection{Properties of MSP}
Next we conduct experiments to illustrate the properties of minimum substring partitioning. Figure \ref{fig:plength} shows the change of peak memory, partition size, and running time with respect to varying length of minimum substring.  Here, we set the k-mer length at 59 and partition short reads into 1,000 wrapped partitions. It shows that the peak memory will decrease significantly when the minimum substring length is increased. The total partition size and the running time will slightly increase.  Both increases are negligible, indicating that MSP is very effective in reducing memory consumption without affecting the runtime performance.

We then fix $p$ at 10, the number of partitions at 1,000, and vary the length of k-mers.  Figure \ref{fig:klength} shows the change of peak memory, partition size, and running time with respect to  k-mer length.  It shows that the peak memory increases slowly together with $k$. It is also observed that increasing $k$ will reduce the total partition size and the running time.  There are two effects inside.  Given $n$ short reads with length $m$, the total size of all the k-mers is equal to $k(m-k+1)n$. We have
\begin{equation}
k(m-k+1)= \frac{(m+1)^2}{4}-(\frac{m+1}{2}-k)^2. \nonumber
\end{equation}
Hence, the size is peaked when $k=(m+1)/2$.  The second effect is the compression ratio of MSP for larger $k$ is higher.  These two figures demonstrate the second effect dominates, since we do not observe a peak at $k=(m+1)/2$.  The result is also in line with the analytical conclusion made for the random string model (see Theorems \ref{lp1m} and \ref{thrm:peakmemory}).

\section{Related Work}
High throughput sequencing technologies are generating tremendous amounts of short reads data.  Assembling these datasets becomes a critical research topic. With the development of next-generation sequencing techniques, the de Bruijn graph sequence assembly approaches became popular, including Euler\cite{pevzner2001eulerian}, Velvet\cite{zerbino2008velvet}, AllPaths\cite{butler2008allpaths}, SOAPdenovo\cite{li2010novo}, etc.

All these de Bruijn graph based algorithms have to solve a critical problem in the process of constructing de Bruijn graph, which merges duplicate k-mers into the same vertex.  When the number of short reads comes to the level of billions, the de Bruijn graph can easily consume hundreds of gigabytes of memory. Several algorithms have been proposed to solve the memory overwhelming problem of graph-based assemblers. Simpson and Durbin \cite{simpson2010efficient} adopted FM-index \cite{ferragina2005indexing} to achieve compression in building the string graph \cite{myers2005fragment}, which is an alternative graph formulation used in sequence assembly (string graph is much more expensive to construct than de Bruijn graph, so it is not as popular as de Bruijn graph). However, the step of building the suffix array and FM-index is very time-consuming and memory-intensive. Orthogonally, Conway and Bromage \cite{conway2011succinct} used succinct bitmap data structure to compress the representation of de Bruijn graph. But the overall space requirement will still increase as the graph becomes ``bigger" (more nodes and edges). Distributed assembly algorithms were also proposed, e.g., ABySS\cite{simpson2009abyss} and Contrail\cite{schatz2010contrail}.  They partition k-mers in a distributed manner to avoid memory bottleneck. Unfortunately, using a hash function to distribute k-mers evenly across a cluster cannot ensure adjacent k-mers being mapped to the same machine.  It results in intense cross-machine communications since adjacent k-mers form edges in the graph. The proposed minimum substring partitioning technique solves this problem: it not only generates small partitions, but also retains adjacent k-mers in the same partition.

The de Bruijn graph construction problem is related to duplicate detection.  The traditional duplicate detection algorithms perform a merge sort to find duplicates, e.g., Bitton and DeWitt \cite{BiDe83}. Teuhola and Wegner \cite{teuhola1991minimal} proposed an $O(1)$ extra space, linear time algorithm to detect and delete duplicates from a dataset.  Teuhola \cite{Teuhola93externalduplicate} introduced an external duplicate deletion algorithm that makes an extensive use of hashing.  It was reported that hash-based approaches are much faster than sort/merge in most cases.  Bucket sort \cite{CLRS01} is adoptable to these techniques, which works by partitioning an array into a number of buckets. Each bucket is then sorted individually.  By replacing sort with hashing, it can solve the duplicate detection problem too.  Duplicate detection has also been examined in different contexts, e.g., stream \cite{MDA05} and text \cite{BiMo'03}.  A survey for general duplicate record detection solutions was given by Elmagarmid, Ipeirotis and Verykios \cite{AGV07}.

The problem setting of de Bruijn graph construction is different from duplicate detection in sense that elements in short reads are highly overlapped and a de Bruijn graph needs to find which element is a duplicate to which.  The proposed minimum substring partitioning technique can utilize the overlaps to reduce the partition size dramatically.  Meanwhile, the three steps, partitioning, mapping, and merging for disk-based de Bruijn graph construction can efficiently connect duplicate k-mers scattered in different short reads into the same vertex.

The concept of minimum substring was introduced in \cite{Robe+04} for memory-efficient sequence comparison. Our work develops minimum substring based partitioning and its use in sequence assembly.  We also theoretically analyze several important properties of minimum substring partitioning.

\section{Conclusions}
We introduced a new partitioning concept - minimum substring partitioning (MSP), which is appropriate and efficient to solve the duplicate k-mer merging problem in the assembly of massive short read sequences. It makes use of the inherent overlaps among k-mers to generate compact partitions.  This partitioning technique was successfully applied to de Bruijn graph construction with very small memory footprint.  We discussed the relations between the partition size and the minimum substring length and analytically derived the capacity of minimum substrings based on a random string model.\nop{We further extended the use of MSP to other applications such as k-mer counting.} Our MSP-based de Bruijn graph construction algorithm\nop{and k-mer counting algorithm were} was evaluated on real DNA short read sequences. Experimental results showed that it can not only successfully complete the tasks on very large datasets within a small amount of memory, but also achieve better performance than existing state-of-the-art algorithms.


\balance

\nop{
\section{ACKNOWLEDGMENTS}
This research was sponsored in part by the U.S. National Science Foundation under grant IIS-0905084 and by the Army Research Laboratory under cooperative agreement W911NF-09-2-0053 (NS-CTA). X. Yan was supported in part by the Institute for Collaborative Biotechnologies through Grant, No. DFR3A-8-447850-23002 from the U.S. Army Research Office.  The views and conclusions contained herein are those of the authors and should not be interpreted as representing the official policies, either expressed or implied, of the Army Research Laboratory or the U.S. Government. The U.S. Government is authorized to reproduce and distribute reprints for Government purposes notwithstanding any copyright notice herein.
}


\section{Appendix}
We design an efficient polynomial-time algorithm for computing the
probability that a given $p$-substring is the minimum $p$-substring of a random $n$-length string $S$.
The complication arises because of the huge overlaps among the $p$-substrings of $S$:
each $p$-substring shares $p-1$ symbols with its predecessor, so these subproblems
are not \emph{independent}. We design a non-trivial \emph{dynamic programming}
algorithm that circumvents this complication, and leads to an $O(n^2)$ algorithm.
Because the underlying problem is quite general, we find it best to describe the
problem and its solution using the following abstract setting.

Let $S \:=\: s_1 s_2 \ldots s_n$ be a random string (the DNA sequence), where each
letter $s_i$ is an independent random variable taking values from the set
$\Sigma \;=\; \{0, 1, 2, 3 \}$ with probabilities $p_0, p_1, p_2, p_3$, respectively.
That is, $s_i$ assumes value $j$ with probability $p_j$, for $j=0,1,2,3$, and
these probabilities sum to $1$, namely, $\sum_{j=1}^4 p_j = 1.$
We will use the notation $S_i$ for the prefix substring of $S$ of length $i$, namely,
$s_1 s_2 \ldots s_i$, and $S(j)$ for its suffix substring of length $j$,
namely, $s_{n-j+1} \ldots s_n$.
The notation $S_i (j)$ will be used for the $j$ symbol long suffix of the
prefix substring $S_i$ ($j \leq i$), namely, $s_{i-j+1} \ldots s_i$ (see Figure~\ref{fig:string}).
We will adopt the convention that substrings of length zero are empty;
in particular, $S_i(0)$ and $S_0(j)$ are empty strings.
Any two substrings of equal length can be compared using the lexicographical order, and
we will use the standard notation $<, \leq, = , \geq, >$ to denote their relative order.

In order to distinguish the target string $W$ from the DNA sequence, we will
call the former a \emph{word}. In particular, given an $m$-word $W$ (to distinguish the abstract problem from the real problem, here we use {\it m} instead of {\it p}), also on
the alphabet $\Sigma \;=\; \{0, 1, 2, 3 \}$, we wish to compute the probability
that no $m$-substring of $S$ is smaller than or equal to $W$. More specifically, what
is the probability that $S_i(m) \; > \; W$, for all $i=m, m+1, \ldots, n$. As we will argue later, if we know this probability for W and the m-word
immediately preceding W in the lexicographical ordering, then by calculating their
difference we can get the probability that W {\it itself } is the minimum m-substring, which is what we ultimately need.

In order to build some intuition into the problem, let us consider the prefix $S_i$.
Let us call $S_i$ \emph{clean} if it does not contain an $m$-substring $\leq W$.
Suppose we inductively assume $S_{i-1}$ to be clean.
Then, it follows that $S_i$ is clean only if $S_i (m) > W$. In other words, to ensure
that a prefix substring $S_i$ is clean we need two conditions: (1) the substring
$S_{i-1}$ is clean, (2) the \emph{$m$-suffix} of
$S_i$, $S_i(m)$, is larger than $W$. In fact, we will need these conditions to be \emph{recursively} enforced, meaning that we will need $S_i (j) > W_j$, for all $j$.

\begin{figure}[tbh]
\centering
\epsfig{file=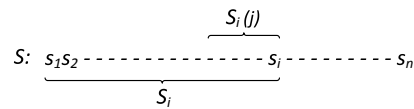, width=0.30\textwidth}
\caption{Illustration of $S$, $S_i$, and the $j$-suffix of $S_i$.}
\vspace{-1mm}
\label{fig:string}
\end{figure}

With this motivation, we now define the 2-dimensional table $Q$, which will form the basis
of our dynamic programming algorithm. The table $Q$ has size $(n+1) \times m$, where
the entry $Q[i,j]$ holds the probability that $S_i$ is clean \emph{and} $S_i(j) > W_j$.
Thus, $Q[i,0]$ is the probability that $S_i$ is clean, and the final value we wish to
compute is $Q[n,0]$, which is the probability that the entire string $S$ is clean, meaning it
has no $m$-substring less than or equal to $W$.
Of course, the probability that \emph{$W$ is the minimum} $m$-word in $S$ is easily
computed as $Q'[n,0] - Q[n,0]$, where $Q'$ is the same dynamic programming table
computed for the target $m$-word $W'$, where $W'$ is the immediate predecessor of $W$
in the lexicographical ordering of $m$-words.

Algorithm {\it MinSTB} (Minimum Substring Tail Bounds) describes in pseudo-code how to compute the $Q$
table in row-major order, with the convention that $Q[0,j]=1$ for all $j$.
Assuming the first $i$ rows of the table have been computed, the algorithm shows how to compute the row $i+1$.
The analysis of the algorithm is given in the following theorem.

\begin{algorithm}
\renewcommand{\thealgorithm}{}
\caption{\hs{\it MinSTB}: Computes the values $Q[i+1,j]$.}
\label{algo:random}
\begin{algorithmic}
\REQUIRE $ 0\leq j\leq m,\; j\leq i\leq n$
\IF {$i+1 < m$}
	\STATE $Q[i\hs+\hs 1,0]\;=\;1$.
	\STATE $Q[i\hs+\hs 1,1]\;=\;\sum_{k > w_1}^3 p_k.$
	\STATE $Q[i\hs+\hs 1,j] \;=\; \sum_{k > w_j}^3 p_k \;\;+\;\; Q[i,j \hs - \hs 1]\cdot p_{w_j},\;\;\; j>1.$
\ELSE
   \STATE $Q[i\hs +\hs 1,0] \;=\;  Q[i,0]\cdot\sum_{k > w_m}^3 p_k
 \;\;+\;\;  Q[i,m\hs - \hs 1]\cdot p_{w_m}$.

   \IF {$w_m > w_j$ and $j>0$}
      \STATE $Q[i\hs +\hs 1,j] \;=\; Q[i,0]\cdot\sum_{k > w_m}^3 p_k
 \;\;+\;\; Q[i,m\hs - \hs 1]\cdot p_{w_m} $.
   \ENDIF
	
	\IF {$w_m < w_j$ and $j>0$}
		\STATE $Q[i\hs +\hs 1,1] \;=\; Q[i,0]\cdot\sum_{k > w_1}^3 p_k$.
		\STATE $Q[i\hs +\hs 1,j] \;=\; Q[i,0]\cdot \sum_{k > w_j}^3 p_k
\;\;+\;\; Q[i,j \hs - \hs 1]\cdot p_{w_j}$ .
   \ENDIF

	\IF {$w_m = w_j$ and $j>0$}
		\STATE $Q[i\hs +\hs 1,1] \;=\; Q[i,0]\cdot \sum_{k > w_1}^3 p_k$.
		\IF {$W_{m-1}(j \hs - \hs 1) >
 W_{j-1}$}
			\STATE $Q[i\hs +\hs 1,j] \;=\; Q[i,0]\cdot\sum_{k > w_j}^3 p_k
 \;\;+\;\;Q[i,m\hs - \hs 1]\cdot  p_{w_j}$.
		\ENDIF
		\IF {$W_{m-1}(j-1)
 \leq W_{j-1}$}
			\STATE $Q[i\hs +\hs 1,j] \;=\; Q[i,0]\cdot \sum_{k > w_j}^3 p_k
 \;\;+\;\;Q[i,j \hs - \hs 1]\cdot p_{w_j}$.
		\ENDIF
   \ENDIF
\ENDIF
\end{algorithmic}
\end{algorithm}

\begin{theorem}
Given a random string $S$ and an $m$-word $W$ on $\Sigma=\{0,1,2,3\}$, we can compute the probability
that $S$ has no $m$-substring $\leq W$ in $O(n^2)$ time.
\end{theorem}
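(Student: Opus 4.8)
The plan is to establish the theorem by proving that algorithm \textit{MinSTB} in the preceding discussion is correct and runs in $O(n^2)$ time; the probability the theorem asks for is exactly the table entry $Q[n,0]$, i.e.\ the probability that the whole prefix $S_n = S$ is \emph{clean} (contains no $m$-substring $\leq W$), so once the table is shown to be filled correctly we simply read this value off.

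For correctness I would induct on the row index $i$, and within each row induct on the suffix length $j$, with the base cases supplied by the regime $i+1 < m$: here $S_{i+1}$ is too short to contain any $m$-substring, hence is vacuously clean, and $Q[i+1,j]$ collapses to the plain lexicographic probability that a block of $j$ i.i.d.\ symbols beats the relevant prefix of $W$, which satisfies the one-line recurrence used by the algorithm by conditioning on the newly exposed symbol. The substantive regime is $i+1 \geq m$. The governing observation is that, conditioned on $S_i$ being clean, the prefix $S_{i+1}$ stays clean if and only if the single new length-$m$ window $S_{i+1}(m)$ is $> W$ (every shorter-ending window is already accounted for by $S_i$'s cleanliness), and, reading the lexicographic comparison left to right, whether $S_{i+1}(m) > W$ is decided jointly by (i) how the length-$(m-1)$ suffix of $S_i$ stands against the length-$(m-1)$ prefix of $W$ — information recorded across row $i$ of the table — and (ii) the value of the fresh symbol $s_{i+1}$ relative to $w_m$. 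Splitting into the cases ``$s_{i+1}$ exceeds / equals / falls below $w_m$'' (equivalently, the branches keyed on comparing $w_m$ with $w_j$) and, in the boundary case, propagating the appropriate row-$i$ entry, yields precisely the transitions of \textit{MinSTB}; I would verify each branch by expanding the relevant joint event into a ``strictly decided within $S_i$'' part, which contributes a row-$i$ probability unchanged, plus a ``tied through $S_i$, decided by $s_{i+1}$'' part, which contributes a row-$i$ probability scaled by a single per-symbol factor $\sum_{k>w_j}p_k$ or $p_{w_j}$.

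The delicate point, and the one I expect to be the main obstacle, is the equality branch $w_m = w_j$: there the correct quantity to carry forward is not automatically the ``obvious'' entry but depends on whether the shifted prefix $W_{m-1}(j-1)$ — the length-$(j-1)$ suffix of the length-$(m-1)$ prefix of $W$ — is $\leq W_{j-1}$ or $> W_{j-1}$. This is exactly the non-independence of overlapping substrings flagged at the outset: consecutive $m$-substrings of $S$ share $m-1$ symbols, so a long partial agreement of $S$ with $W$ can simultaneously constrain several windows, and the resolution is a border/KMP-style comparison of $W$ against its own shifts. Making this branch air-tight — arguing that the chosen predecessor captures \emph{all and only} the ways a clean prefix can remain clean after appending $s_{i+1}$ — is the technical heart of the proof.

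For the running time, note that the table has $(n+1)\times m$ entries and $m \leq n$ (otherwise there is nothing to compute). I would precompute in $O(1)$ (there are only four symbols) the partial tail sums $\sum_{k>c} p_k$ for each symbol $c$, and precompute in $O(m^2)$ time the self-overlap comparisons of $W_{m-1}(j-1)$ against $W_{j-1}$ for all $j$ (a cheaper $O(m)$ border computation would also do, but is not needed). With this precomputation every cell $Q[i+1,j]$ is a fixed-size arithmetic expression in a constant number of already-computed cells, so filling the table in row-major order costs $O(nm)$, and $O(nm) + O(m^2) = O(n^2)$. Reading $Q[n,0]$ gives the claimed probability; running the identical recurrence on the lexicographic predecessor $W'$ of $W$ and taking the difference $Q'[n,0] - Q[n,0]$ then gives, at no extra asymptotic cost, the probability that $W$ is itself the minimum $m$-substring of $S$.
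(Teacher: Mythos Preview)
Your proposal is correct and follows essentially the same approach as the paper: induction on $i$ (and $j$) to fill the $Q$ table, a vacuous base regime $i+1<m$, the main regime split on how $w_m$ relates to $w_j$, the self-overlap comparison $W_{m-1}(j-1)$ versus $W_{j-1}$ in the equality branch, and the $O(nm)=O(n^2)$ count. One small wording slip: the cases ``$s_{i+1}$ exceeds/equals/falls below $w_m$'' are not \emph{equivalent} to the branches on $w_m$ versus $w_j$---the algorithm first branches deterministically on $w_m$ versus $w_j$ and then, within each branch, sums over the possible values of the random symbol $s_{i+1}$---but this does not affect the substance of your argument.
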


\begin{proof}
We prove how Algorithm {\it MinSTB} correctly computes the $(i\hs + \hs 1)$th row of the table $Q$
from the $i$th row.
First consider the case where $i\hs + \hs 1 < m$. Then $S_{i+1}$ has no $m$-substring, and we just need that $S_{i+1}(j) > W_j$. If $j=0$,
then $Q[i\hs + \hs 1,j] \;=\; 1$. If $j=1$, then we simply need that $s_{i+1} > w_1$. Thus
$$Q[i\hs + \hs 1,1] \;=\; \sum_{k > w_1}^3\hs p_k.$$
Finally if $j > 1$, then all we need is that either (1) $s_{i+1} > w_j$, or
(2) $s_{i+1}=w_j$ and $S_i(j \hs - \hs 1) > W_{j-1}$. Therefore:
$$Q[i\hs + \hs 1,j] \;=\; \sum_{k > w_j}^3\hs p_k \;\;+\;\; Q[i,j\hs-\hs 1]\cdot p_{w_j},$$
where $\sum_{k > w_j}^{3} p_k$ is the probability that $s_{i+1} > w_j$, $p_{w_j}$ is the probability that $s_{i+1}=w_j$, and $Q[i,j \hs - \hs 1]$ is the probability that
$S_i$ is clean and $S_i(j \hs - \hs 1) > W_{j-1}$.

Now consider the case where $i+1 > m$.
Suppose $S_i$ is clean, then $S_{i+1}$ is not clean if and only if $S_{i+1}(m)\leq W$.
This will {\bf not} happen if and only if $s_{i+1} > w_m$,
or $s_{i+1}=w_m$ but $S_i(m\hs - \hs 1) > W_{m-1}$. Therefore
$$
Q[i\hs + \hs 1,0] \;=\; Q[i,0]\cdot\hs\sum_{k > w_m}^{3}\hs p_k  \;\;+\;\; Q[i,m\hs - \hs 1]\cdot p_{w_m},
$$
where $Q[i,0]$ is the probability that $S_i$ is clean.

The computation of $Q[i\hs + \hs 1,j]$ for $j\neq 0$ depends on
the value of $w_m$ compared to $w_j$. This stems from the fact
that if $w_m < w_j$, then we only need to compare $S_{i+1}$ against $W_j$, i.e., if
$S_i(j \hs - \hs 1) > W_{j-1}$ then necessarily $S_i(m\hs - \hs 1) > W_{m-1}$.
In fact, there are three cases:

\begin{enumerate}[leftmargin=11pt]

\item $w_m > w_j$. In this case if $s_{i+1} > w_m >w_j$ then $S$ cannot have any $m$-substring
$\leq W_m$, or any $j$-substring $\leq W_j$, which ends at index $i$.
So all we need is for $S_i$ to be clean.
If $s_{i+1} < w_m$ then $S_{i+1}$ is not clean. If $s_{i+1} = w_m$, then $s_{i+1} > w_j$ and $S_{i+1}(j) > W_j$. In this
case we just need that $S_i(m\hs - \hs 1) > W_{m-1}$, and we have
$$Q[i\hs + \hs 1,j] \;=\; Q[i,0]\cdot\hs\sum_{k > w_m} p_k  \;\;+\;\; Q[i,m\hs - \hs 1]\cdot p_{w_m}.$$
This holds if $j=1$, since $s_{i+1}>w_j$ even if $s_{i+1}=w_m$.

\item $w_m < w_j$. The argument is similar to the previous case: if $s_{i+1} > w_j > w_m$, then all we need
is for $S_i$ to be clean. If $s_{i+1} = w_j$ then if $j=1$,
$$Q[i\hs + \hs 1,1] \;=\;Q[i,0]\cdot\hs\sum_{k > w_1} p_k,$$
but if $j>1$ we need that $S_i(j \hs - \hs 1) > W_{j-1}$.
Therefore
$$Q[i\hs + \hs 1,j] \;=\;Q[i,0]\cdot\hs\sum_{k > w_j} p_k  \;\;+\;\;  Q[i,j \hs - \hs 1]\cdot p_{w_j}.$$

\item $w_m = w_j$. If $s_{i+1} > w_m$ then all we need is for $S_i$ to be clean. If $s_{i+1} = w_m = w_j$,
then if $j=1$
$$Q[i\hs + \hs 1,1] \;=\;Q[i,0]\cdot\hs\sum_{k > w_j} p_k.$$
If $j>1$ there are two cases:

\begin{itemize}[leftmargin=9pt]
\item $W_{m-1}(j \hs - \hs 1) > W_{j-1}$. Then if $S_{i+1}$ is clean,
necessarily $S_{i+1}(j) > W_j$ because $s_{i+1}=w_j$
and $S_{i+1}(m\hs - \hs 1) > W_{m-1} $, which implies that
$S_{i+1}(j \hs - \hs 1) > W_{j-1}$. Therefore we only need $S_{i+1}$ to be
clean, which happens only if $S_i(m\hs - \hs 1) > W_{m-1}$.
In this case:
$$Q[i\hs + \hs 1,j] \;=\;Q[i,0]\cdot\hs\sum_{k > w_j} p_k   \;\;+\;\; Q[i,m\hs - \hs 1]\cdot p_{w_j} .$$
\vspace{-0.3cm}
\item $W_{m-1}(j \hs - \hs 1) \leq W_{j-1}$. The argument is the same as the previous case, except that now
we need $S_i(j \hs - \hs 1) > W_{j-1}$, which happens with probability $Q[i,j \hs - \hs 1]$. Thus
$$Q[i\hs + \hs 1,j] \;=\; Q[i,0]\cdot\hs\sum_{k > w_j} p_k \;\;+\;\; Q[i,j \hs - \hs 1]\cdot p_{w_j}.$$
\vspace{-0.4cm}
\end{itemize}

\end{enumerate}

Each entry of the table can be computed in constant time, and therefore the whole table can be
computed in $O(n^2)$.
\end{proof}

\end{document}